\documentclass{llncs}
\usepackage{multicol}

\usepackage{amsmath}
\usepackage{amsfonts}
\usepackage{amssymb}
\usepackage{verbatim}

\usepackage{graphicx}
\usepackage{epic}
\usepackage{eepic}
\usepackage{epsfig,float}

\usepackage{multicol}
\pagestyle{plain}
\DeclareGraphicsRule{.tif}{png}{.png}{`convert #1 `dirname #1`/`basename #1 .tif`.png}

\renewcommand{\le}{\leqslant}
\renewcommand{\ge}{\geqslant}

\newcommand{\ol}{\overline}
\newcommand{\eps}{\varepsilon}
\newcommand{\emp}{\emptyset}

\newcommand{\Sig}{\Sigma}

\newcommand{\noin}{\noindent}

\newcommand{\bi}{\begin{itemize}}
\newcommand{\ei}{\end{itemize}}
\newcommand{\be}{\begin{enumerate}}
\newcommand{\ee}{\end{enumerate}}
\newcommand{\bd}{\begin{description}}
\newcommand{\ed}{\end{description}}
\newcommand{\bq}{\begin{quote}}
\newcommand{\eq}{\end{quote}}

\newcommand{\der}[1]{\xrightarrow{#1}}

\newcommand{\cD}{{\mathcal D}}

\newcommand{\cN}{{\mathcal N}}
\newcommand{\cP}{{\mathcal P}}

\newcommand{\cS}{{\mathcal S}}
\newcommand{\cT}{{\mathcal T}}
\newcommand{\cU}{{\mathcal U}}

\newcommand{\cW}{{\mathcal W}}
\newcommand{\cX}{{\mathcal X}}
\newcommand{\one}{{\mathbf 1}}

\newcommand{\bs}{\backslash}

\spnewtheorem{conj}{Conjecture}{\bfseries}{\rmfamily}

\title{Universal Witnesses for State Complexity of  Boolean Operations and Concatenation Combined with Star\thanks{This work was supported by the Natural Sciences and Engineering Research Council of Canada under grant No.~OGP0000871.}
}
\author{Janusz~Brzozowski and David Liu}

\authorrunning{Brzozowski, Liu}   

\institute{David R. Cheriton School of Computer Science, University of Waterloo, \\
Waterloo, ON, Canada N2L 3G1\\
{\tt \{brzozo,dyliu\}@uwaterloo.ca}
}
\begin{document}
\maketitle
\begin{abstract}
We study the state complexity of boolean operations and  product (concatenation, catenation) combined with star. 
We derive tight upper bounds for the symmetric differences and differences of two languages,
one or both of which are starred, and for the product  of two starred languages.
We prove that the previously discovered bounds for the union and the intersection of  languages with one or two starred arguments, for the product of two languages one of which is starred, and for the star of the product of two languages can all be met by the recently introduced
universal witnesses and their variants. 
\smallskip

\noin
{\bf Keywords:}
boolean operation, combined operation, concatenation, regular language, product, star, state complexity, universal witness
\end{abstract}

\section{Introduction}
\label{sec:intr}
The \emph{state complexity of a regular language} is the number of states in the minimal deterministic finite automaton (DFA) recognizing the language.
The \emph{state complexity of an operation} on regular languages is the worst-case state complexity of the result of the operation as a function of the state complexities of the arguments.
For more information on this topic see~\cite{Brz10,Brz12,Yu01}.

 Let $K$ and $L$ be two regular languages over alphabet $\Sig$, and let their state complexities be $m$ and $n$, respectively.
 In 2007 A.\ Salomaa, K.\ Salomaa, and  Yu~\cite{SSY07} showed using ternary witnesses that the complexity of 
$(K\cup L)^*$ is $2^{m+n-1} -(2^{m-1}+2^{n-1}-1)$. 
They also established a lower bound for $(K\cap L)^*$ using an alphabet of 8 letters.
These results were improved by Jir\'askov\'a and Okhotin~\cite{JiOk11} who showed that binary witnesses  suffice for $(K\cup L)^*$, and that $3\cdot 2^{mn-2}$ is a tight upper bound for $(K\cap L)^*$; they used an alphabet of 6 letters.
In 2012, Gao and Yu~\cite{GY_FI12}  showed with ternary witnesses that the complexity of $K\cup L^*$ is 
$m(2^{n-1}+2^{n-2}-1)+1$, and that the same upper bound applies to $K\cap L^*$.
Moreover, it was shown in~\cite{GKY_TCS12} by  Gao, Kari and Yu  that
quaternary witnesses meet  the bound
$(2^{m-1}+2^{m-2}-1)(2^{n-1}+2^{n-2}-1)+1$
for $K^*\cup L^*$ and 
$K^*\cap L^*$.
In 2008, Gao, K. Salomaa, and Yu~\cite{GSY_FI08} demonstrated using quaternary witnesses that 
$2^{m+n-1}+2^{m+n-4}-(2^{m-1}+2^{n-1}-m-1)$ is a tight upper bound for $(KL)^*$. 
The complexity of $KL^*$ was studied by Cui, Gao, Kari and Yu~\cite{CGKY_IJFCS12} in 2012. They proved with ternary witnesses that 
the tight bound is $m(2^{n-1}+2^{n-2})-2^{n-2}$.
The same authors also showed in~\cite{CGKY_TCS12b} using quaternary witnesses  that the complexity of $K^*L$ is
$5\cdot 2^{m+n-3}-2^{m-1}-2^n+1$.
In summary, nine operations using union, intersection, and  product (also called concatenation or catenation) combined with star have been studied.

To establish  the state complexity of an operation
one finds an upper bound and  languages to act as \emph{witnesses} to show that the  bound is tight. 
A witness is usually a sequence $(L_n\mid n\ge k)$ of languages, where  $k$ is some small positive integer; we will call such a sequence a  \emph{stream of languages}.
The languages  in a stream normally differ only in the parameter $n$. In the past,
two different streams have been used for most binary operations.

Recently, Brzozowski~\cite{Brz12} proposed the DFA 
$\cU_n(a,b,c)=(Q,\Sig,\delta,0, \{n-1\})$ of Fig.~\ref{fig:witness} and its language $U_n(a,b,c)$ as the ``universal witness'' DFA and language, respectively, for $n\ge 3$.
The restrictions of the DFA and  the language to alphabet $\{a,b\}$ are denoted by
$\cU_n(a,b,\emp)$ and  $U_n(a,b,\emp)$.
It was proved in~\cite{Brz12} that the bound $2^{n-1}+2^{n-2}$ for star is met by $U_n(a,b,\emp)$, and the bound $2^n$ for reversal, by $U_n(a,b,c)$.
The bound
$(m-1)2^n+2^{n-1}$ for product is met by $U_m(a,b,c)$ and $U_n(a,b,c)$.
The bound $mn$ for union, intersection, difference ($K\setminus L$) and symmetric difference
($K\oplus L$) is met by the streams $U_m(a,b,c)$ and $U_n(a,b,c)$ if $m\neq n$, as was conjectured in~\cite{Brz12} and proved in~\cite{BrLiu12a}.
If $m=n$, it is necessary to use two different streams; however, it is possible to use streams that are almost the same, in the following sense.
Two languages $K$ and $L$ over $\Sig$ are \emph{permutationally equivalent} if one can be obtained from the other by permuting the letters of the alphabet, and a similar definition applies to DFA's.
It was proved in~\cite{Brz12} that
two permutationally equivalent streams 
$U_m(a,b,c)$ and $U_n(b,a,c)$ are witnesses to the bound for the boolean operations:
union ($K\cup L$), intersection ($K\cap L$), difference  ($K\setminus L$), and symmetric difference ($K\oplus L$).
Thus $U_n(a,b,c)$ is indeed a universal witness for the basic operations.

\begin{figure}[h]
\begin{center}
\setlength{\unitlength}{0.00043745in}
\begingroup\makeatletter\ifx\SetFigFont\undefined%
\gdef\SetFigFont#1#2#3#4#5{%
  \reset@font\fontsize{#1}{#2pt}%
  \fontfamily{#3}\fontseries{#4}\fontshape{#5}%
  \selectfont}%
\fi\endgroup%
{\renewcommand{\dashlinestretch}{30}
\begin{picture}(6677,1617)(0,-10)
\put(469,1401){\makebox(0,0)[lb]{\smash{{\SetFigFont{7}{8.4}{\familydefault}{\mddefault}{\updefault}$c$}}}}
\put(2747.000,1151.333){\arc{333.333}{2.2143}{7.2105}}
\blacken\path(2884.638,1114.417)(2847.000,1018.000)(2925.107,1085.913)(2884.638,1114.417)
\put(1667.000,1151.333){\arc{333.333}{2.2143}{7.2105}}
\blacken\path(1804.638,1114.417)(1767.000,1018.000)(1845.107,1085.913)(1804.638,1114.417)
\put(5179.000,1158.333){\arc{333.333}{2.2143}{7.2105}}
\blacken\path(5316.638,1121.417)(5279.000,1025.000)(5357.107,1092.913)(5316.638,1121.417)
\put(6312.000,1211.333){\arc{333.333}{2.2143}{7.2105}}
\blacken\path(6449.638,1174.417)(6412.000,1078.000)(6490.107,1145.913)(6449.638,1174.417)
\put(591,703){\ellipse{630}{630}}
\put(2755,703){\ellipse{630}{630}}
\put(1685,701){\ellipse{630}{630}}
\put(5187,703){\ellipse{630}{630}}
\put(6309,696){\ellipse{720}{720}}
\put(6309,697){\ellipse{630}{630}}
\path(12,703)(282,703)
\blacken\path(162.000,673.000)(282.000,703.000)(162.000,733.000)(162.000,673.000)
\path(1992,703)(2442,703)
\blacken\path(2322.000,673.000)(2442.000,703.000)(2322.000,733.000)(2322.000,673.000)
\path(3072,703)(3522,703)
\blacken\path(3402.000,673.000)(3522.000,703.000)(3402.000,733.000)(3402.000,673.000)
\path(912,703)(1362,703)
\blacken\path(1242.000,673.000)(1362.000,703.000)(1242.000,733.000)(1242.000,673.000)
\path(4422,711)(4872,711)
\blacken\path(4752.000,681.000)(4872.000,711.000)(4752.000,741.000)(4752.000,681.000)
\path(5495,703)(5945,703)
\blacken\path(5825.000,673.000)(5945.000,703.000)(5825.000,733.000)(5825.000,673.000)
\path(1497,965)(1496,967)(1493,970)
	(1488,977)(1480,986)(1469,998)
	(1456,1014)(1441,1031)(1424,1049)
	(1405,1068)(1385,1088)(1364,1106)
	(1341,1125)(1318,1141)(1292,1157)
	(1265,1170)(1236,1182)(1205,1191)
	(1172,1196)(1137,1198)(1103,1195)
	(1070,1188)(1039,1178)(1011,1165)
	(986,1150)(962,1134)(940,1116)
	(919,1097)(900,1077)(881,1057)
	(864,1037)(849,1018)(835,1000)
	(824,985)(814,972)(799,950)
\blacken\path(841.814,1066.047)(799.000,950.000)(891.387,1032.247)(841.814,1066.047)
\path(5997,493)(5996,493)(5995,492)
	(5993,491)(5989,489)(5983,487)
	(5976,483)(5966,479)(5954,474)
	(5940,468)(5923,460)(5903,452)
	(5881,442)(5856,432)(5829,420)
	(5799,408)(5766,395)(5731,381)
	(5695,367)(5656,352)(5615,337)
	(5572,321)(5527,305)(5481,289)
	(5433,273)(5383,257)(5332,241)
	(5280,226)(5225,210)(5169,195)
	(5111,180)(5051,165)(4989,151)
	(4925,137)(4859,123)(4789,110)
	(4717,98)(4642,86)(4564,75)
	(4482,64)(4397,54)(4308,45)
	(4215,37)(4118,30)(4019,24)
	(3915,19)(3810,15)(3702,13)
	(3605,12)(3509,13)(3412,15)
	(3317,17)(3223,21)(3131,25)
	(3041,30)(2952,37)(2866,43)
	(2782,51)(2699,59)(2619,67)
	(2540,76)(2463,86)(2387,96)
	(2313,106)(2240,117)(2169,128)
	(2099,139)(2030,151)(1962,163)
	(1895,175)(1829,188)(1764,200)
	(1701,213)(1638,226)(1577,239)
	(1517,252)(1459,265)(1402,277)
	(1347,290)(1294,302)(1242,314)
	(1194,326)(1147,337)(1103,348)
	(1062,358)(1023,367)(988,376)
	(956,384)(926,391)(900,398)
	(877,404)(858,409)(841,413)
	(827,417)(816,420)(807,422)
	(801,424)(792,426)
\blacken\path(915.650,429.254)(792.000,426.000)(902.635,370.683)(915.650,429.254)
\put(536,636){\makebox(0,0)[lb]{\smash{{\SetFigFont{7}{8.4}{\rmdefault}{\mddefault}{\updefault}$0$}}}}
\put(1624,636){\makebox(0,0)[lb]{\smash{{\SetFigFont{7}{8.4}{\rmdefault}{\mddefault}{\updefault}$1$}}}}
\put(2711,636){\makebox(0,0)[lb]{\smash{{\SetFigFont{7}{8.4}{\rmdefault}{\mddefault}{\updefault}$2$}}}}
\put(2120,801){\makebox(0,0)[lb]{\smash{{\SetFigFont{7}{8.4}{\familydefault}{\mddefault}{\updefault}$a$}}}}
\put(3192,823){\makebox(0,0)[lb]{\smash{{\SetFigFont{7}{8.4}{\familydefault}{\mddefault}{\updefault}$a$}}}}
\put(5622,823){\makebox(0,0)[lb]{\smash{{\SetFigFont{7}{8.4}{\familydefault}{\mddefault}{\updefault}$a$}}}}
\put(957,838){\makebox(0,0)[lb]{\smash{{\SetFigFont{7}{8.4}{\familydefault}{\mddefault}{\updefault}$a,b$}}}}
\put(4917,658){\makebox(0,0)[lb]{\smash{{\SetFigFont{6}{7.2}{\familydefault}{\mddefault}{\updefault}$n-2$}}}}
\put(3875,644){\makebox(0,0)[lb]{\smash{{\SetFigFont{7}{8.4}{\familydefault}{\mddefault}{\updefault}$\cdots$}}}}
\put(4550,816){\makebox(0,0)[lb]{\smash{{\SetFigFont{7}{8.4}{\familydefault}{\mddefault}{\updefault}$a$}}}}
\put(6184,1431){\makebox(0,0)[lb]{\smash{{\SetFigFont{7}{8.4}{\familydefault}{\mddefault}{\updefault}$b$}}}}
\put(3364,133){\makebox(0,0)[lb]{\smash{{\SetFigFont{7}{8.4}{\familydefault}{\mddefault}{\updefault}$a,c$}}}}
\put(6040,640){\makebox(0,0)[lb]{\smash{{\SetFigFont{6}{7.2}{\familydefault}{\mddefault}{\updefault}$n-1$}}}}
\put(4962,1408){\makebox(0,0)[lb]{\smash{{\SetFigFont{7}{8.4}{\familydefault}{\mddefault}{\updefault}$b,c$}}}}
\put(1039,1273){\makebox(0,0)[lb]{\smash{{\SetFigFont{7}{8.4}{\familydefault}{\mddefault}{\updefault}$b$}}}}
\put(2547,1423){\makebox(0,0)[lb]{\smash{{\SetFigFont{7}{8.4}{\familydefault}{\mddefault}{\updefault}$b,c$}}}}
\put(1557,1423){\makebox(0,0)[lb]{\smash{{\SetFigFont{7}{8.4}{\familydefault}{\mddefault}{\updefault}$c$}}}}
\put(592.000,1136.333){\arc{333.333}{2.2143}{7.2105}}
\blacken\path(729.638,1099.417)(692.000,1003.000)(770.107,1070.913)(729.638,1099.417)
\end{picture}
}
\end{center}
\caption{DFA $\cU_n(a,b,c)$ of language $U_n(a,b,c)$.} 
\label{fig:witness}
\end{figure}

It turns out that the witness $U_n(a,b,c)$ cannot meet the bound for some combined operations. However, the notion of universal witness can be broadened to include 
 ``dialects'' of $U_n(a,b,c)$.
Some terminology is required, before we define this concept. 

The inputs of  DFA $\cU_n$ perform the following transformations on the set 
$Q=\{0,\ldots,n-1\}$ of states.
Input $a$ is a \emph{cycle} of all $n$ states, and this is denoted by $a:(0,\ldots,n-1)$.
Input $b$ is a \emph{transposition} of  0 and 1, and does not affect any other states; this is denoted by
$b:(0,1)$, and by $b:(i,j)$, if $i$ and $j$ are transposed. 
Input $c$ is a \emph{singular} transformation sending state $n-1$ to state 0, and not affecting any other states; this is denoted by $c:{n-1\choose 0}$, and by $c:{i\choose j}$, in general.
The \emph{constant} transformation sending all states to state $i$ is denoted by ${Q\choose i}$.
The \emph{identity} transformation on $Q$ is denoted by $\one_Q$.

It is known~\cite{Brz12}  that the inputs of $\cU_n(a,b,c)$ of Fig.~\ref{fig:witness} perform all $n^n$ transformations of states.

A \emph{dialect} of $U_n(a,b,c)$ is the language of any DFA with three inputs $a$, $b$, and $c$, where $a$ is a cycle of length $n$ as above, $b$ is the transposition of \emph{any} two states $(i,j)$, and 
$c$ is a singular transformation $c:{i \choose j}$ sending \emph{any} state $i$ to \emph{any} state $j$.
The initial state is always 0, but the set of final states is arbitrary, as long as the resulting DFA is minimal.

Since there are operations for which ternary witnesses do not meet the worst-case bounds, 
the notions of universal witness and dialect have been extended to quaternary alphabets~\cite{Brz12}, by adding a fourth input $d$ which performs the identity permutation, denoted by $d:\one_Q$.
The concepts of permutational equivalence and dialects were extended in the obvious way to quaternary languages and DFA's.
The following dialects are used in this paper: 
\be
\item
$\cU_{\{0\},n}(a,b,c)$, which is $\cU_n(a,b,c)$ with $\{0\}$ as the set of final states.
\item
$\cT_n(a,b,c)=(Q,\Sig, \delta_T, 0, \{n-1 \})$, where $a:(0,\ldots,n-1)$, $b:(0,1)$, and $c:{1 \choose 0}$.
\item
$\cW_n(a,b,c,d)=(Q,\Sig,\delta_\cW,0,\{n-1\})$,
where $a:(0,\ldots,n-1)$, $b:(n-2,n-1)$, $c:{1 \choose 0}$, and $d: \one_Q$.
\item
$\cW_{\{0\},n}(a,b,c,d)$, which is $\cW_n(a,b,c,d)$ with $\{0\}$ as the set of final states.
\ee
We use the convention that $\cX$ is a DFA if and only if  $X$ is its language.
The operation $K\circ L$ represents any one of the four boolean operations
union, intersection, difference and symmetric difference.

In this paper, we consider the following 13 operations that use boolean operations and product combined with star :

\noin \mbox{}
$\hspace{2cm} K \cup L^*, K \cap L^*, K \oplus L^*, K \bs L^*, L^* \bs K$, \\ \mbox{}
$\hspace{2cm}  K^* \cup L^*, K^* \cap L^*, K^* \oplus L^*, K^* \bs L^*$,\\ \mbox{}
$\hspace{2cm} KL^*, K^*L, K^*L^*, (KL)^*$.
\smallskip

Our contributions are as follows:
\be
\item
We derive the bound $m(2^{n-1}+2^{n-2}-1)+1$ for $K_m\setminus L_n^*$, $L_n^*\setminus K_m$ and $K_m\oplus L_n^*$.  We  show that the known bounds for $K_m\cup L_n^*$,
$K_m\oplus L_n^*$ and $L_n^*\setminus K_m$ are met by the streams $U_m(a,b,c)$ and $U_n(b,a,c)$, and that, for $K_m\cup L_n^*$ and $K_m\setminus L_n^*$, the dialect $U_{\{0\},m}(a,b,c)$  and the language $U_n(b,a,c)$ act as witnesses. 
This corrects an error in~\cite{GY_FI12}, where it is claimed that the witnesses that serve for union also work for intersection.
\item
We derive the bound $(2^{m-1}+2^{m-2}-1)(2^{n-1}+2^{n-2}-1)+1$ for $K_m^*\setminus L_n^*$,  and $K_m^*\oplus L_n^*$.  We  show that the known bounds for $K_m^*\cup L_n^*$ and
$K_m^*\cap L_n^*$  are met by the dialects $W_m(a,b,c,d)$ and $W_n(d,c,b,a)$, and that, for $K_m^*\setminus L_n^*$ and $K_m^*\oplus L_n^*$, the dialects $W_{\{0\},m}(a,b,c,d)$ and  $W_n(d,c,b,a)$ act as witnesses. 

\item
We prove that the known bound $m(2^{n-1}+2^{n-2})-2^{n-2}$ for 
$K_m L_n^*$ is met by the  dialects $T_m(a,b,c)$ and $T_n(b,a,c)$.
\item
We show that the known bound $5\cdot 2^{m+n-3}-2^{m-1}-2^n+1$ for 
$K_m^* L_n$ is met by $U_m(a,b,c,d)$
and $U_n(d,c,b,a)$.
\item
We derive the bound $2^{m+n-1} - 2^{m-1} - 3 \cdot 2^{n-2} + 2$ for 
$K_m^*L_n^*$ and show that it is met by  $U_m(a,b,c,d)$
and $U_n(d,c,b,a)$.
\item
We prove that the known bound $2^{m+n-1}+2^{m+n-4}-(2^{m-1}+2^{n-1}-m-1)$ for $(K_mL_n)^*$ is met by  $W_m(a,b,c,d)$  and $W_n(d,c,b,a)$. 
\item 
In obtaining these results, we prove Conjectures 7, 9, 10, 12, 15 and 17 of~\cite{Brz12}.
\ee

Sections~\ref{sec:onerev} and~\ref{sec:tworev} study boolean operations with one and two starred arguments, respectively. 
Products with one or two starred arguments are examined in
Section~\ref{sec:prodstar}.
In Section~\ref{sec:starops} we consider stars of product, intersection, and difference, and Section~\ref{sec:conc} concludes the paper.

\section{Boolean Operations with One Starred Argument}
\label{sec:onerev}
Recall that the complexity of $L_n^*$ is $2^{n-1}+2^{n-2}$.
Gao and Yu~\cite{GY_FI12}  showed that the complexity of $K_m\cup L_n^*$
is $m(2^{n-1}+2^{n-2}-1)+1$.
They used the following  DFA's over alphabet $\Sig=\{a,b,c\}$:
For $K$, let $\cD_K=(Q_K,\Sig,\delta_K,0,\{m-1\})$, with $Q_K=\{0,\ldots,m-1\}$, 
$a,b:\one_{Q_K}$, and
$c:(0,\ldots n-1)$.
For $L$, let
$\cD_L=(Q_L,\Sig,\delta_L,0,\{n-1\})$, with $Q_L=\{0,\ldots,n-1\}$,
$a:(0,\ldots,{n-1})$,
$b$ defined by $\delta_L(0,b)=0$, $\delta_L(i,b)=i+1 \pmod n $, for $i=1,\ldots,n-1$, and 
$c:\one_{Q_L}$.
They showed that the same bound also holds for $K_m\cap L_n^*$, and  claimed that the same witnesses
work. That claim is incorrect, however, as is shown below.

The results of \cite{GY_FI12} for union are extended here to $K_m \cup L_n^*$, $K_m\oplus L_n^*$ and $L_n^*\setminus K_m$ with witnesses $U_m(a,b,c)$ and $U_n(b,a,c)$, and to
$K_m\cap L_n^*$ and $K_m\setminus L_n^*$ with witnesses $U_{\{0\},m}(a,b,c)$
 and $U_n(b,a,c)$.

\begin{proposition}
\label{prop:bound1}
Let $K_m$ and $L_n$ be two regular languages with complexities $m$ and $n$.
Then the complexities of $K_m\circ L_n^*$ and $L_n^*\setminus K_m$ are at most $m(2^{n-1}+2^{n-2}-1)+1$, for $n\ge 3$.
\end{proposition}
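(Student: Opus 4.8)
The plan is to build a single product automaton whose reachable part bounds the minimal DFA of all five target languages simultaneously, and then to exhibit exactly $m-1$ states that can never be reached.

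First I would fix the minimal DFA $\cD_K=(Q_K,\Sig,\delta_K,0,F_K)$ of $K_m$, with $Q_K=\{0,\ldots,m-1\}$, and construct a DFA $\cD_{L^*}$ for $L_n^*$ by the standard Kleene-star construction: to the minimal DFA of $L_n$ (start state $q_0$, final set $F_L$) add a fresh initial state $q_\eps$, make it final to accept $\eps$, let it duplicate the outgoing transitions of $q_0$, add a ``restart'' to $q_0$ whenever a final state is entered, and determinize. The well-known analysis of this construction shows that the reachable states are $\{q_\eps\}$ together with those subsets $S$ of the $n$ original states obeying the restart invariant ``$S\cap F_L\ne\emp\Rightarrow q_0\in S$'', and that there are at most $2^{n-1}+2^{n-2}$ of them, which is precisely the known complexity of $L_n^*$. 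The feature I will exploit is that $q_\eps$ has no incoming transition, hence is \emph{transient}: it can label a state of $\cD_{L^*}$ only before any letter has been read.

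Next I would form the direct product $\cD=\cD_K\times\cD_{L^*}$, whose states are the pairs $(p,S)$ with $p\in Q_K$ and $S$ a state of $\cD_{L^*}$. By choosing the set of final pairs appropriately, $\cD$ recognizes each of $K_m\cup L_n^*$, $K_m\cap L_n^*$, $K_m\oplus L_n^*$, $K_m\bs L_n^*$, and $L_n^*\bs K_m$; crucially, all five use the \emph{same} transition structure, so the number of reachable states is the same for all of them and bounds each minimal DFA uniformly. A priori $\cD$ has at most $m(2^{n-1}+2^{n-2})$ states. The heart of the argument is then a reachability count: every pair $(p,q_\eps)$ with $p\ne 0$ is unreachable. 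Indeed, since $q_\eps$ is transient, the second coordinate equals $q_\eps$ only at the initial configuration, i.e.\ before any input is read; but then the first coordinate is forced to be the initial state $0$ of $\cD_K$. Hence among the $m$ pairs with second coordinate $q_\eps$, only $(0,q_\eps)$ is reachable, and the remaining $m-1$ are not. The reachable part of $\cD$ therefore has at most $m(2^{n-1}+2^{n-2})-(m-1)=m(2^{n-1}+2^{n-2}-1)+1$ states, and since the minimal DFA of each target language is obtained from the reachable part of $\cD$ by merging equivalent states, its number of states is bounded by this same quantity.

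The step requiring the most care is the first one: I must verify that the chosen star construction genuinely produces a transient initial state while still meeting the bound $2^{n-1}+2^{n-2}$ — in particular that adjoining $q_\eps$ does not inflate the count, since in the worst case the tally $(2^{n-1}-1)+2^{n-2}+1$ absorbs $q_\eps$ exactly. Everything downstream — the product construction and the identification of the $m-1$ unreachable pairs — is uniform across the four boolean operations and the reversed difference, so no separate bookkeeping is needed for the individual operations, and the bound $m(2^{n-1}+2^{n-2}-1)+1$ follows for all of them at once for $n\ge 3$.
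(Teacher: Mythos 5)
Your proposal is correct and takes essentially the same approach as the paper: both construct the DFA of $L_n^*$ by adjoining a new initial-and-final state to the DFA of $L_n$, applying the subset construction with the restart invariant (a set containing a final state must contain $0$), forming the direct product with the DFA of $K_m$, and exploiting the fact that the new state is transient, hence can pair only with the initial state of the $K_m$-automaton. The only difference is cosmetic: you count the full product and subtract the $m-1$ unreachable pairs, while the paper enumerates the allowable pairs directly; the arithmetic is identical.
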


\begin{proof}
Let $\cD_1=(Q_1,\Sig,\delta_1,0,F_1)$ with
$Q_1=\{0,\ldots,m-1\}$ be the DFA of $K_m$, and 
let $\cD_2=(Q_2,\Sig,\delta_2,0,F_2)$ with $Q_2=\{0,\ldots,n-1\}$ be the DFA of $L_n$.
Construct $\cN_2$, an NFA accepting $L_n^*$, by adding a new final state $s$ to $\cD_2$, with the same outgoing transitions as state $0$, and $\eps$-transitions from each final state in $F_2$ to $0$. Now $\cN_2$ has initial state $\{s\}$ instead of $\{0\}$.
See Fig.~\ref{fig:KULstar} for an illlustration.
Let $\cS_2$
be the minimal DFA obtained from $\cN_2$ by the subset construction and minimization,
and let $\cP$ be the direct product of $\cD_1$ and $\cS_2$.

\begin{figure}[hbt]
\begin{center}
\input KULstar.eepic
\end{center}
\caption{DFA $\cD_1$ of $U_4(a,b,c)$ and NFA $\cN_2$ of $(U_5(b,a,c))^*$.} 
\label{fig:KULstar}
\end{figure}

For all five boolean operations, the states of $\cP$ are ordered pairs, where the first element is a state $i \in Q_1$ and the second is either $\{s\}$ or a subset of $Q_2$.
Because of the $\eps$-transitions, the allowable states are $(0, \{s\})$, 
all states of the form $(i,S)$ where $S$ is non-empty and $S\cap F_2=\emp$, 
and all states of the form $(i,S)$ where  $S$ contains at least one final state together with 0. The total number of possible states is largest if there is only one final state, say $n-1$.
Hence the number of states in $\cP$ cannot exceed 1 plus
 $m(2^{n-1}-1)$ for states of the form $(i,S)$ where $S$ is non-empty and $n-1\notin S$, 
and  $m2^{n-2}$ for states of the form $(i,S)$ where  
$0, n-1 \in S$.
Therefore the complexity of $K_m \circ L_n^*$ and $L_n^*\setminus K_m$ cannot exceed 
$1+ m(2^{n-1} + 2^{n-2} - 1)$.
\qed
\end{proof}

\begin{theorem}[$K\circ L^*$]
\label{thm:KULstar}
Let $K_m=U_m(a,b,c)$ and $L_n=U_n(b,a,c)$. For $m,n\ge 3$,
the complexities of  $K_m\cup L_n^*$, $K_m\oplus L_n^*$, and $L_n^*\setminus K_m$ are all $m(2^{n-1}+2^{n-2}-1)+1$.
Let $K'_m$ be the language  $U_{\{0\},m}(a,b,c)$. 
Then the complexities of  $K_m'\cap L_n^*$ and $K_m'\setminus L_n^*$ are also $m(2^{n-1}+2^{n-2}-1)+1$.
\end{theorem}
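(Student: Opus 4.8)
The plan is to establish the matching lower bounds, since Proposition~\ref{prop:bound1} already supplies the upper bound $m(2^{n-1}+2^{n-2}-1)+1$ for all five operations. I would form the product DFA $\cP$ of $\cD_1$ (the DFA of $K_m$, or of the dialect $K'_m$) with the minimal star automaton $\cS_2$ of $L_n^{*}=(U_n(b,a,c))^{*}$ exactly as in that proof, and show that the $m(2^{n-1}+2^{n-2}-1)+1$ reachable states listed there are, under the final-state set appropriate to each operation, pairwise distinguishable. It is convenient first to record the action of each input on a pair $(i,S)$: input $a$ is the full $m$-cycle on $i$ and the transposition $(0,1)$ on $S$; input $b$ is $(0,1)$ on $i$ and the full $n$-cycle on $S$ (with the star closure adjoining $0$ whenever $n-1$ is produced); and input $c$ sends $m-1\mapsto0$ on $i$ and $n-1\mapsto0$ on $S$, fixing all other states. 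Reachability depends only on these transitions and on the initial state $(0,\{s\})$, so it is identical for $K_m$ and for $K'_m$; only distinguishability is sensitive to the choice of final states.

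For reachability, note that on the second coordinate the inputs $a,b,c$ are exactly those witnessing the bound $2^{n-1}+2^{n-2}$ for $(U_n(b,a,c))^{*}$, so every legal subset $S$ (nonempty with $n-1\notin S$, or containing both $0$ and $n-1$) is reachable by a word acting on $S$ alone. The task is to lift this to the pairs $(i,S)$. I would proceed by induction on $|S|$: apply $c$ to the initial state to reach $(0,\{0\})$, transport the single element around the $n$-cycle with powers of $b$, correcting the first coordinate with $a$, and grow larger subsets by cycling an element to $n-1$ so the star closure adjoins $0$, then rotating and swapping with $b$ and $a$. Having fixed $S$, I would set the first coordinate to the desired $i$ by appending a word that acts as the identity on $S$ while realizing the required permutation on the first coordinate; such words arise from powers of $a$ (which fix $S$ when $S$ meets $\{0,1\}$ in $\emptyset$ or $\{0,1\}$) and $b^{\,n}$, together with $c$. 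The \textbf{main obstacle} here is the decoupling of the two coordinates: every letter moves both, and the parity coupling between the $m$-cycle and the transposition on one side and the transposition and the $n$-cycle on the other must be untangled; the non-injective input $c$ is what ultimately breaks the coupling.

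For distinguishability I would take $(i,S)$ to be final in $\cP$: for $K_m\cup L_n^{*}$ iff $i=m-1$ or $S$ is accepting (that is, $S=\{s\}$ or $n-1\in S$); for $K_m\oplus L_n^{*}$ iff exactly one of these holds; for $L_n^{*}\setminus K_m$ iff $S$ is accepting and $i\neq m-1$; for $K'_m\cap L_n^{*}$ iff $i=0$ and $S$ is accepting; and for $K'_m\setminus L_n^{*}$ iff $i=0$ and $S$ is not accepting. Generic pairs are then routine. If two reachable states differ in the second coordinate, I invoke the minimality of $\cS_2$ to separate them there and append an identity-on-$S$ word that positions the first coordinates so that the boolean acceptance condition does not mask the difference; since $m\ge3$ there is always room to do so. If the second coordinates agree but the first differ, I drive the common second coordinate into the accepting or non-accepting configuration demanded by the operation (using powers of $b$, keeping the two first coordinates distinct) and then use the $m$-cycle $a$ to place exactly one of the first coordinates on $K$'s final state.

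The delicate point, and the reason the dialect $U_{\{0\},m}$ must replace $U_m$ for $K'_m\cap L_n^{*}$ and $K'_m\setminus L_n^{*}$, is the separation of the initial state $(0,\{s\})$ from the state $(0,\{0\})$. Because the new final state $s$ has the same outgoing transitions as $0$, these two states have identical successors under every nonempty word, so they can be separated \emph{only} by the empty word, that is, only through their own membership in the final-state set. Now $(0,\{s\})$ has accepting second coordinate and first coordinate $0$, whereas $(0,\{0\})$ has non-accepting second coordinate and first coordinate $0$. For $\cup$, $\oplus$ and $L_n^{*}\setminus K_m$ this difference in the second coordinate already decides membership, since those conditions do not require the first coordinate to be $K$'s final state, and $U_m$ suffices. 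For $\cap$ and $K'_m\setminus L_n^{*}$, however, acceptance \emph{requires} the first coordinate to be $K$'s final state; with that final state at $m-1\neq0$ both $(0,\{s\})$ and $(0,\{0\})$ are rejected and hence indistinguishable, and the count falls one below the bound. This is exactly the oversight in \cite{GY_FI12}. Placing $K$'s final state at $0$ gives the two states opposite membership in the final-state set, separating them and restoring the count. Since $U_m$ and $U_{\{0\},m}$ share their transition structure, reachability is unaffected, so in every case the number of reachable, pairwise distinguishable states equals the upper bound of Proposition~\ref{prop:bound1}, as claimed.
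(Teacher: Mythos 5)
Your high-level plan coincides with the paper's: the upper bound comes from Proposition~\ref{prop:bound1}; the finality conditions per operation are exactly as you list them; and the crucial twin-state observation---that $(0,\{s\})$ and $(0,\{0\})$ have identical successors under every nonempty word, hence can be separated only by their own finality---is precisely the paper's explanation of why $U_m(a,b,c)$ serves for $\cup$, $\oplus$, $L_n^*\bs K_m$ while the dialect $U_{\{0\},m}(a,b,c)$ is forced for $\cap$ and $\bs$. That part of your proposal is correct and matches the paper, including the diagnosis of the error in the earlier literature.

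The genuine gap is in reachability. You propose to reach $(j,S)$ for some $j$ and then adjust the first coordinate by appending words that act as the identity on $S$, but your stock of such words does not exist. First, $b^n$ is \emph{not} the identity on the second coordinate: the $\eps$-closure adjoins $0$ whenever the subset meets $n-1$, so $\{0\}\der{b^{n-1}}\{0,n-1\}\der{b}\{0,1\}$, i.e.\ $b^n$ sends $\{0\}$ to $\{0,1\}$. Second, for the allowable sets with $0,n-1\in S$ and $1\notin S$ (there are $m\,2^{n-3}$ such states), \emph{no} positive power of $a$ fixes $S$---the closure gives $a^k(S)=S\cup\{1\}$ for all $k\ge1$---and $c$ does not fix such $S$ either, since it deletes $n-1$. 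So for these states your toolkit contains no $S$-fixing word at all; the ``decoupling'' that you yourself flag as the main obstacle is never carried out, and asserting that $c$ ``ultimately breaks the coupling'' is not a construction. (A smaller instance of the same looseness occurs in distinguishability: powers of $b$ cannot drive $S=Q_2$ into a non-accepting configuration, since $b$ fixes $Q_2$; the paper must apply $c$ there, after first using $a$-shifts to move both first coordinates away from $m-1$ so that $c$ cannot merge them.) The paper sidesteps the decoupling problem entirely: it first reaches $(j,\{0\})$ for \emph{every} $j$ (via $(ba)^{i-1}$, together with $a^{m+1}$ or $a^{m-1}ca$ according to the parity of $m$), and then exploits the fact that the words $w\in\{a,b\}^*$ reaching an arbitrary allowable $S$ from $\{0\}$ in $\cN_2$ act as \emph{permutations} of $Q_1$; to reach $(i,S)$ one therefore starts from $(j,\{0\})$ with $j$ the $w$-preimage of $i$ and applies $w$, so no $S$-fixing words are ever needed. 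Replacing your ``fix $S$, then move $i$'' step by this pre-compensation argument (and patching the $S=Q_2$ case with $c$) yields a complete proof; as written, your reachability argument does not go through.
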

\begin{proof}
Let the various automata be defined as in the proof of Proposition~\ref{prop:bound1}, but this time with $K_m=U_m(a,b,c)$ and $L_n=U_n(b,a,c)$.
We show that all $m(2^{n-1} + 2^{n-2} - 1) + 1$ allowable states of $\cP$ are reachable.
We use the notation $(i, S) \der{w} (j, T)$ to denote that state $(j,T)$ is reached from $(i,S)$ by  word $w$.
We have $(0, \{s\}) \der{c} (0, \{0\}) \der{(ba)^{i-1}} (i, \{0\})$ for $2 \le i \le m-1$. 
If $m$ is odd, $(0, \{0\}) \der{a^{m+1}} (1, \{0\})$; if $m$ is even, $(0, \{0\}) \der{a^{m-1}ca} (1, \{0\})$.

Brzozowski showed in \cite{Brz12} that all allowable states of $\cN_2$ are reachable from $\{0\}$ by words in $\{a,b\}^*$.
These words act as permutations on $\cD_1$.
To reach state $(i,S)$ apply the word $w$ that takes $\{0\}$ to $S$ in $\cN_2$ to state 
$(j,\{0\})$, where $j$ is such that $j \der{w} i$.
Therefore all the allowable states are reachable.

For distinguishability, first consider two states $(i, S)$ and $(j, T)$, where $S \neq T$.
Then there is a $k$ either in $S \bs T$ or in  $T \bs S$; without loss of generality, assume $k\in S \bs T$.
By applying $b^{n-1-k}$, we reach states $(i', S')$ and $(j', T')$, where $n-1\in S'\bs T'$.
Note that applying some cyclic shift $a^l$ to $\cD_1$, we reach states $(i'', S'')$ and $(j'', T'')$, where $n-1\in S''\bs T''$.
These states are distinguishable for the boolean operations as follows:
\bi
\item $K_m \cup L_n^*, K_m \oplus L_n^*, L_n^* \bs K_m$: apply a cyclic shift so $i',j'$ are non-final in $\cD_1$. This is possible since as $\cD_1$ has a single final state and $m\ge 3$.
\item $K'_m \cap L_n^*$: map $i$ to the final state of $\cD_1$.
\item $K'_m \bs L_n^*$: map $j$ to the final state of $\cD_1$.
\ei

Now consider two states $(i, S)$ and $(j, S)$, $i < j$. We may assume $j < m-1$ because,
 since $m \ge3$, we can apply a cyclic shift of $a$'s so that neither $i$ nor $j$ is equal to $m-1$. Doing so might change $S$ to $S'$, but  $S'$ is the same in both states and $S'$ remains non-empty. 
The states are distinguishable as follows:
\bi
\item $K_m \cup L_n^*, K_m \oplus L_n^*, K'_m \bs L_n^*$: apply $c$ so that $n-1 \notin S$, then $a^k$ for some $k$ to map $j$ to a final state.
\item $K'_m \cap L_n^*, L_n^* \bs K_m$: since $S$ is non-empty, apply a cyclic shift so $n-1 \in S$, then another shift so $j$ is final, and hence $i$ is non-final.
\ei

Finally, note that only states $(0, \{s\})$ and $(0, \{0\})$ reach $(1, \{1\})$ on applying $a$; therefore by the previous argument, $(0, \{s\})$ is distinguishable from all other states except possibly $(0, \{0\})$.
Note now that states $(0,\{s\})$ and $(0,\{0\})$ are distinguishable in 
$K_m\cup L_n^*$, $K_m\oplus L_n^*$ and $L_n^*\setminus K_m$, but
equivalent in $K_m\cap L_n^*$ and $K_m\setminus L_n^*$.
Hence we cannot have the same witnesses for both intersection and union.
However, the choice of final states distinguishes $(0, \{s\})$ from $(0, \{0\})$ for $K'_m \cap L_n^*$ and $K'_m \bs L_n^*$. 
Therefore all  reachable states are distinguishable.
\qed
\end{proof}

\section{Boolean Operations with Two Starred Arguments}
\label{sec:tworev}

 Gao, Kari and Yu~\cite{GKY_TCS12} showed that the bounds for  $K_m^*\cup L_n^*$ and 
$K_m^*\cap L_n^*$ are both $(2^{m-1}+2^{m-2}-1)(2^{n-1}+2^{n-2}-1)+1$.
They used the following  DFA's over alphabet $\Sig=\{a,b,c,d\}$:
For $K$, let
$\cD_K=(Q_K,\Sig,\delta_K,0,\{m-1\})$, with $Q_K=\{0,\ldots,m-1\}$,
$a:(0,\ldots,{m-1})$,
$b$ defined by $\delta_K(0,b)=0$, $\delta_K(i,b)=i+1 \pmod m $, for $i=1,\ldots,m-1$, and 
$c,d:\one_{Q_K}$.
For $L$, let $\cD_L=(Q_L,\Sig,\delta_L,0,\{n-1\})$, with $Q_L=\{0,\ldots,n-1\}$, 
$a,b:\one_{Q_L}$, 
$c:(0,\ldots n-1)$, and 
$d$ defined by $\delta_K(0,d)=0$, $\delta_K(i,d)=i+1 \pmod n $, for $i=1,\ldots,n-1$.

We extend these results  to $K_m\oplus L_n^*$ and $K_n^*\setminus L_m^*$, for which we now derive upper bounds.

\begin{proposition}
\label{prop:bound2}
Let $K_m$ and $L_n$ be two regular languages with complexities $m$ and $n$.
Then the complexities of $K_m^*\circ L_n^*$ are at most $(2^{m-1}+2^{m-2}-1)(2^{n-1}+2^{n-2}-1)+1$ for $m,n\ge 3$.
\end{proposition}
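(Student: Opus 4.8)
The plan is to mirror the construction used in the proof of Proposition~\ref{prop:bound1}, but now star \emph{both} arguments. First I would fix DFAs $\cD_1=(Q_1,\Sig,\delta_1,0,F_1)$ and $\cD_2=(Q_2,\Sig,\delta_2,0,F_2)$ for $K_m$ and $L_n$, with $|Q_1|=m$ and $|Q_2|=n$, and from each build an NFA for the starred language exactly as before: add a fresh initial-and-final state $s_1$ (resp.\ $s_2$) whose outgoing letter transitions copy those of state $0$, together with $\eps$-transitions from every state of $F_1$ (resp.\ $F_2$) back to $0$. Let $\cS_1$ and $\cS_2$ be the minimal DFAs obtained by the subset construction and minimization, and let $\cP=\cS_1\times\cS_2$ be their direct product. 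Since the boolean operation $\circ$ only determines which states of $\cP$ are final, not which are reachable, a single reachability count bounds the complexity for all four operations simultaneously.

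Next I would recover the allowable-subset structure in each component. Because the new state $s_i$ has no incoming edge (the $\eps$-transitions point to $0$, never to $s_i$), the subset $\{s_i\}$ is reachable only by the empty word, and every subset reached after a nonempty word lives inside $Q_i$. Assuming a single original final state, which maximizes the count just as in Proposition~\ref{prop:bound1}, the allowable states of $\cS_1$ are $\{s_1\}$, the non-empty subsets of $Q_1$ omitting $m-1$, and the subsets containing both $0$ and $m-1$; this gives $1+(2^{m-1}-1)+2^{m-2}=2^{m-1}+2^{m-2}$ states, exactly $2^{m-1}+2^{m-2}-1$ of which differ from $\{s_1\}$. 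The analogous count for $\cS_2$ yields $2^{n-1}+2^{n-2}$ states, with $2^{n-1}+2^{n-2}-1$ of them different from $\{s_2\}$.

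The crucial observation, and the only place the two-starred case departs from a naive product bound, is that the two ``$s$-states'' are \emph{synchronized}. Both coordinates of $\cP$ read the same input, and the first coordinate can equal $\{s_1\}$ only after the empty word, which is precisely when the second coordinate equals $\{s_2\}$. Hence the initial state $(\{s_1\},\{s_2\})$ is the unique reachable state in which either coordinate is an $s$-subset; every other reachable state is a pair of $s$-free allowable subsets. The number of reachable states of $\cP$ is therefore at most $1+(2^{m-1}+2^{m-2}-1)(2^{n-1}+2^{n-2}-1)$, the claimed value, and since minimization only decreases the count this bounds the complexity of $K_m^*\circ L_n^*$ for each boolean operation.

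I expect the synchronization step to be the main obstacle: without it one would multiply the two full counts $(2^{m-1}+2^{m-2})(2^{n-1}+2^{n-2})$ and overshoot the target. The remaining ingredients, namely verifying that $s_i$ is unreachable after a nonempty word and recounting the star DFA size, are routine once the construction is in place.
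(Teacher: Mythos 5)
Your proposal is correct and follows essentially the same route as the paper: the same star-NFA construction with fresh initial-final states $s_1,s_2$, the same key observation that $(\{s_1\},\{s_2\})$ is the only reachable state of the product whose coordinates involve the $s$-states, and the same count of $2^{m-1}+2^{m-2}-1$ (resp.\ $2^{n-1}+2^{n-2}-1$) remaining states per component. The only difference is cosmetic: you re-derive the per-component count from the allowable-subset structure, where the paper simply cites it as the known state count for the star operation.
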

\begin{proof}
Let $\cD_1=(Q_1,\Sig,\delta_1,0,F_1)$  be the DFA of $K_m$, and 
$\cD_2=(Q_2,\Sig,\delta_2,0,F_2)$,   the DFA of $L_n$.
Let $\cN_1$ ($\cN_2$) be the NFA  for $K_m^*$ ($L_n^*$) obtained by adding a new initial and final state $s_1$ ($s_2$), transitions from state $s_1$ ($s_2$) the same as from $0$ in $\cD_1$ ($\cD_2$), and  an $\eps$-transition from each final state of $\cD_1$ ($\cD_2$) to the initial state 0 of $\cD_1$ ($\cD_2$).
See Fig.~\ref{fig:KstarULstar} for an example of this construction.
Let 
$\cS_1$ and $\cS_2$ be the minimal DFA's 
obtained from $\cN_1$ and $\cN_2$ by the subset construction and minimization.
Finally, let $\cP$ be the direct product of $\cS_1$ and $\cS_2$.

\begin{figure}[t]
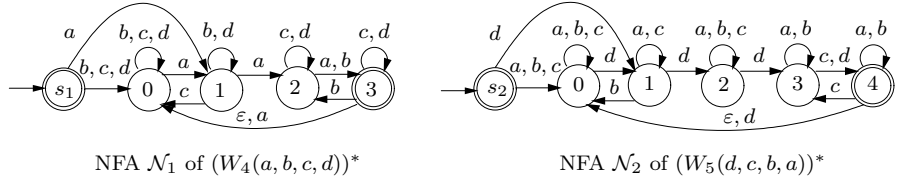

\begin{center}
\input KstarULstar.eepic
\end{center}
\caption{NFA's $\cN_1$ and $\cN_2$ of $(W_4(a,b,c,d))^*$ and $(W_5(d,c,b,a))^*$.} 
\label{fig:KstarULstar}
\end{figure}

The states of $\cP$ are ordered pairs, where the first element is a
subset of $\{s_1\}\cup Q_1$  and the second is a subset of $\{s_2\}\cup Q_2$.
Note that $s_1$ and $s_2$ can only appear in the initial state $(\{s_1\},\{s_2\})$ of $\cP$.
After any input  is applied to $\cP$, the state has the form $(S,T)$, where
$S$ is a state of $\cS_1$ other than $\{s_1\}$ (there are at most $2^{m-1}+2^{m-2}-1$ such states),
and $T$ is a state of $\cS_2$ other than $\{s_2\}$ (there are at most $2^{n-1}+2^{n-2}-1$ such states), and this is independent of the witnesses used.
Thus $(2^{m-1}+2^{m-2}-1)(2^{n-1}+2^{n-2}-1)+1$ is an upper bound for the number of states of the DFA for $K^*\circ L^*$.
\qed
\end{proof}

\begin{theorem}[$K^*\circ L^*$]
Let $K_m=W_m(a,b,c,d)$ and $L_n=W_n(d,c,b,a)$. 
For $m,n\ge 3$, 
the complexities of $K_m^*\cup L_n^*$ and $K_m^*\cap L_n^*$
are $(2^{m-1}+2^{m-2}-1)(2^{n-1}+2^{n-2}-1)+1$.
If  $K'_m$ is the language 
of $\cW_{\{0\},m}$, then the complexities of 
$(K'_m)^*\setminus L_n^*$ and $(K'_m)^*\oplus L_n^*$
are also $(2^{m-1}+2^{m-2}-1)(2^{n-1}+2^{n-2}-1)+1$.
\end{theorem}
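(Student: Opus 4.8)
The plan is to take the upper bound from Proposition~\ref{prop:bound2} and, for the stated witnesses, prove that all $(2^{m-1}+2^{m-2}-1)(2^{n-1}+2^{n-2}-1)+1$ allowable states of $\cP$ are reachable and, for each of the four operations, pairwise distinguishable. I set up $\cN_1,\cN_2,\cS_1,\cS_2,\cP$ exactly as in Proposition~\ref{prop:bound2}, with $K_m=W_m(a,b,c,d)$, $L_n=W_n(d,c,b,a)$, and with the first component's final set changed to $\{0\}$ in the difference and symmetric-difference cases. Writing a state as $(S,T)$, the initial state $(\{s_1\},\{s_2\})$ is the only one in which $s_1$ or $s_2$ occurs; every other reachable state has $S$ among the allowable subsets of $Q_1$ and $T$ among those of $Q_2$. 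The letter permutation produces a partial decoupling that drives the whole argument: in $K_m$ the letter $d$ is the identity and in $L_n$ the letter $a$ is the identity, so $a$ acts only on the first component (as the $m$-cycle) and $d$ only on the second (as the $n$-cycle), while $b$ and $c$ act on both.

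For reachability I would proceed in two stages. First I seed $\cP$ at $(\{0\},\{0\})$, reached from the initial state by the word $db$: the letter $d$ fixes the first component and sends the second to $\{1\}$, and then $b$ sends the first component to $\{0\}$ and the second from $\{1\}$ to $\{0\}$. Since $\{0\}$ and $\{s_1\}$ have identical outgoing transitions in $\cS_1$, the state $\{0\}$ reaches every allowable subset $S$; moreover each of $a,b,c$ fixes $\{0\}$ in $\cS_2$ (as $a$ is the identity there, $1\notin\{0\}$ so the singular $b$ fixes it, and $n-2,n-1\notin\{0\}$ so the transposition $c$ fixes it). Hence words over $\{a,b,c\}$ realise all of $\cS_1$ while holding the second component at $\{0\}$, giving every state $(S,\{0\})$. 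Second, from each $(S,\{0\})$ I reach $(S,T)$ for every allowable $T$ using words over $\{b,d\}$: on $\cS_2$ these generate the full transformation monoid of $Q_2$ (a cycle together with the rank-$(n-1)$ map $b$), so they reach every $T$, while on $\cS_1$ they act only as powers of the transposition $(m-2,m-1)$, hence as the identity precisely when the number of $b$'s is even.

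The main obstacle is exactly the coupling of $b$ between the two components in this second stage: a word over $\{b,d\}$ that reaches the desired $T$ may use an odd number of $b$'s and so leave the first component at the transposed image of $S$ rather than at $S$. I would remove this with the identity that on $\cS_2$ the singular map satisfies $b^2=b$, whereas on $\cS_1$ the transposition satisfies $b^2=\one$: replacing one occurrence of $b$ by $bb$ leaves the second component unchanged but flips the parity, so every $T$ is reachable with an even number of $b$'s and thus with the first component restored to $S$. This reaches all allowable pairs $(S,T)$.

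For distinguishability I would argue as in the proof of Theorem~\ref{thm:KULstar}. If two reachable states differ in their first components, some state of $Q_1$ lies in one first component but not the other; using $a$ to position that state and the remaining letters to fix the acceptance of the second factor, I send exactly one of the two states to acceptance under $K_m^*\circ L_n^*$, and symmetrically when the second components differ. The delicate case, exactly as in Theorem~\ref{thm:KULstar}, is the initial state: because $s_1$ and $\{0\}$ (respectively $s_2$ and $\{0\}$) have identical transitions, only $(\{s_1\},\{s_2\})$ and $(\{0\},\{0\})$ have a common image under the first letter, so $(\{s_1\},\{s_2\})$ is separated from every other state except possibly $(\{0\},\{0\})$, and these two agree on every nonempty word and can differ only on $\eps$. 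The initial state accepts $\eps$ for union and intersection but rejects it for difference and symmetric difference, while under the plain witnesses $W_m,W_n$ the state $(\{0\},\{0\})$ has both components rejecting and so rejects $\eps$ for all four operations. This separates the pair for union and intersection but not for difference and symmetric difference; choosing $\{0\}$ as the final set of the first component in $\cW_{\{0\},m}$ makes the first component of $(\{0\},\{0\})$ accepting, so this state now accepts $\eps$ for difference and symmetric difference and is thereby separated from the initial state. Checking these separations for each operation completes the proof.
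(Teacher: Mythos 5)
Your proposal follows the same skeleton as the paper's own proof (same product of subset automata; reach the states $(S,\{0\})$ first, then extend to $(S,T)$ by words over $\{b,d\}$ using the idempotence of $b$ on the second component; same treatment of the initial state, and the same explanation of why $\cW_{\{0\},m}$ is needed for difference and symmetric difference). However, the steps that carry all the technical weight are either unproved or false. First, your stage 1 rests on the assertion that $\{0\}$ reaches every allowable subset $S$ of $Q_1$ in $\cS_1$. The observation that $\{0\}$ and $\{s_1\}$ have identical transitions only transfers reachability from $\{s_1\}$ to $\{0\}$; it does not prove that $\{s_1\}$ reaches all $2^{m-1}+2^{m-2}-1$ allowable subsets, which is precisely the nontrivial fact to be established for the dialect $W_m(a,b,c,d)$ (it is not a citable consequence of the star result for $U_n(a,b,\emp)$ in \cite{Brz12}, since the transposition and the final state sit at different places). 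The paper proves it by an induction on $|S|$ using words in $\{a,c\}^*$. Note also that subsets can only grow through the $\eps$-transition out of $m-1$ (resp.\ $n-1$), since any transformation maps a singleton to a singleton; so your appeal to ``generating the full transformation monoid of $Q_2$'' is both false (a cycle and a rank-$(n-1)$ idempotent generate a monoid whose only permutations are the $n$ rotations) and not the mechanism that produces non-singleton sets $T$.

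Second, your parity repair in stage 2 is incorrect: a word over $\{b,d\}$ with an even number of $b$'s need not act as the identity on the first component, because of the $\eps$-closure. If $m-2\in S$ and $0,m-1\notin S$, then $b$ sends $S$ to $(S\bs\{m-2\})\cup\{m-1,0\}$ and a second $b$ gives $S\cup\{0\}\neq S$; for instance $\{m-2\}\der{b^2}\{0,m-2\}$. Consequently your construction never produces the states $(S,T)$ for such exceptional $S$ (with $T$ requiring a $b$ in stage 2). The paper isolates exactly this case ($m-2\in S$, $0,m-1\notin S$) and repairs it by reaching $(\{i_1-1,\ldots,i_k-1\},T)$ and applying $a$. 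Third, your distinguishability argument is a one-sentence gloss: since $b$ and $c$ act nontrivially on both components, ``using the remaining letters to fix the acceptance of the second factor'' is the entire difficulty, and it must be carried out separately for the four operations (difference is not symmetric in its arguments, and the two components have different accepting conditions) while controlling the $\eps$-closures; this is where the paper spends most of its proof, with words such as $c^2ac^2$, $a^{m-2}$, $(cba)^{m-3}$ and $b^2d^{n-2}$. Only your analysis of $(\{s_1\},\{s_2\})$ versus $(\{0\},\{0\})$, and of how the choice of final set $\{0\}$ separates them for difference and symmetric difference, is complete and matches the paper.
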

\begin{proof}
Let the various automata be defined as in the proof of Proposition~\ref{prop:bound2}, but this time with $K_m=W_m(a,b,c,d)$ and $L_n=W_n(d,c,b,a)$.
We now show that all $(2^{m-1}+2^{m-2}-1)(2^{n-1}+2^{n-2}-1)+1$ allowable states discussed in Proposition~\ref{prop:bound2} are reachable.

We first show that all allowable subsets of $Q_1$ are reachable in $\cD_1$, ignoring $\cD_2$.
First, $\{s_1\} \der{c} \{0\} \der{a^{m-1}} \{0,m-1\}$.
Suppose all states $S$ with $\{0,m-1\} \subseteq S \subseteq Q_1$, $|S| = k$, $k \ge 2$ are reachable.
 All states $S$ with
$  \{0,1\} \subseteq S\subseteq Q_1$ of size $k$ are now reachable by applying $a$.
If $S = \{i_1, \dots, i_k\}$ with $i_1 < \cdots < i_k < m-1$,
let $j = i_2 - i_1 - 1$;
then $\{0, 1, i_3 - j - i_1, \dots, i_k - j - i_1\} \der{(ac)^ja^{i_1}} S$.

Now states $\{0,m-1\}  \subseteq S$ of size $k+1$ can now be reached as follows: $\{i_1-1,  \dots, i_{k-1}-1, m-2\} \der{a} \{0,i_1,  \dots, i_{k-1}, m-1\}$.

Therefore all allowable states of $\cD_1$ are reachable by words in $\{a,c\}^*$.

In $\cN_2$, $a$ and $c$ map states $s_2$ and 0 to 0. 
Therefore all allowable states of $\cP$ of the form $(S, \{0\})$ are reachable.
A symmetric argument shows that all states $T$ of $\cD_2$ are reachable by words in $\{b^2, d\}^*$ (as $b^2$ and $b$ are the same transformation on $\cD_2$). 
All of these words map states $S \subseteq Q_1$ to themselves, except in the case $0, m-1 \notin S$, $m-2 \in S$.
Let $S = \{i_1, \dots, i_k\}$ be such a state; then for all allowable $T$, $(\{i_1 - 1, \dots, i_k - 1\}, T)$ is reachable, and reaches $(S,T)$ when $a$ is applied.
Therefore all allowable states are reachable.

Next we show that all the states of $\cP$ are distinguishable.
Recall that for $K_m^* \cup L_n^*$ and $K_m^* \cap L_n^*$, we use $\{m-1\}$ as the final state of $\cN_1$,
and for $(K'_m)^* \oplus L_n^*$ and $(K'_m)^* \bs L_n^*$, we use $\{0\}$.

Suppose we have states $(S_1,T_1)$, $(S_2, T_2)$ with $T_1 \neq T_2$.
Then there is a $k$ either in $T_1 \bs T_2$ or in  $T_2 \bs T_1$; without loss of generality, assume $k\in T_1 \bs T_2$.
By applying $d^{n-1-k}$, we reach states $(S_1, T_1')$ and $(S_2, T_2')$, where $n-1\in T_1'\bs T_2'$.
Apply $c^2ac^2$ so that $T_1'$ and $T_2'$ are unchanged, but now $1,2 \notin S'_1 \cup S'_2$.
Then apply $a^{m-2}$ so $0, m-1 \notin S''_1 \cup S''_2$.
This distinguishes the two states for $K_m^* \cup L_n^*$ and $(K'_m)^* \oplus L_n^*$.
For $K_m^* \cap L_n^*$, since $S_1 \neq \emp$, we may apply a cyclic shift to $\cD_1$ so that $m-1 \in S'_1$ to distinguish the states.
For $(K'_m)^* \bs L_n^*$, we can assume that $h \in S''_2$, and  use 
$a^{m-1-h}$ to map $S''_2$ to $S'''_2$, where  $\{0,m-1\}
\subseteq S'''_2$. This also maps $S''_1$ to $S'''_1$, and keeps $T'_1$ and
$T'_2$ unchanged. Since $n-1 \in T'_1 \bs T'_2$, we have $(S'''_1, T'_1)$ is
non-final and $(S'''_2, T'_2)$ is final for $(K'_m)^* \bs L_n^*$.

Now suppose $S_1 \neq S_2$. 
For $K_m^* \cup L_n^*$ and $K_m^* \cap L_n^*$ the above argument is symmetric.
For the other two operations, apply a cyclic shift so that $m-1 \in S'_1 \bs S'_2$.
Now apply $(cba)^{m-3}$ so that $m-1 \in S''_1 \bs S''_2$, and $2,\dots, m-2 \notin S''_1 \cup S''_2$.
Apply $a$ so that $0 \in S'''_1 \bs S'''_2$.
Then as above, apply $b^2d^{n-2}$ so that $n-1 \notin T'_1 \cup T'_2$, while leaving $S'''_1$ and $S'''_2$ unchanged.
This distinguishes the states for $(K'_m)^* \bs L_n^*$ and $(K'_m)^* \oplus L_n^*$.

Therefore all $(2^{m-1} + 2^{m-2} - 1)(2^{n-1} + 2^{n-2} - 1)$ states of the form $(S,T)$ are distinguishable.
It remains to distinguish $(\{s_1\},\{s_2\})$ from the other states.
As in Theorem \ref{thm:KULstar}, $(\{s_1\}, \{s_2\})$ is distinguished from all states except $(\{0\},\{0\})$ by  $a$. 
It is distinguishable from $(\{0\}, \{0\})$ by the choice of final state of $\cD_1$.
\qed 
\end{proof}

\section{Products with Starred Arguments}
\label{sec:prodstar}

\subsection{The Language $KL^*$}

The complexity of $KL^*$ was studied by Cui, Gao, Kari, and Yu~\cite{CGKY_IJFCS12}.
They showed that $m(2^{n-1}+ 2^{n-2})-2^{n-2}$ is a tight bound using the following witnesses 
over alphabet $\Sig=\{a,b,c\}$:
For $K$, let
$\cD_K=(Q_K,\Sig,\delta_K,q_0,\{m-1\})$, with $Q_K=\{q_0,\ldots,q_{m-1}\}$,
$a:(q_0,\ldots,q_{m-1})$,
$\delta_K(q_i,b)=q_{i+1}$ for $i=0,\ldots,m-3$, $\delta_K(q_{m-2},b)=q_0$,  $\delta_K(q_{m-1},b)=q_{m-2}$, and
$\delta_K(q_i,c)=q_{i+1}$ for $i=0,\ldots,m-3$, $\delta_K(q_{m-2},c)=q_0$, $\delta_K(q_{m-1},c)=q_{m-1}$.
For $L$, let $\cD_L=(Q_L,\Sig,\delta_L,0,\{n-1\})$, with $Q_L=\{0,\ldots,n-1\}$, 
$a:(0,\ldots n-1)$,
$\delta_L(0,b)=0$, $\delta_L(i,b)=i+1$ for $i=1,\ldots,n-2$, $\delta(n-1,b)=1$;
$c:{n-1\choose 1}$.
We  prove  that two permutationally equivalent dialects of $U_n(a,b,c)$ also meet the bound.

\begin{figure}[b]
\begin{center}
\setlength{\unitlength}{0.00039370in}
\begingroup\makeatletter\ifx\SetFigFont\undefined%
\gdef\SetFigFont#1#2#3#4#5{%
  \reset@font\fontsize{#1}{#2pt}%
  \fontfamily{#3}\fontseries{#4}\fontshape{#5}%
  \selectfont}%
\fi\endgroup%
{\renewcommand{\dashlinestretch}{30}
\begin{picture}(11440,1748)(0,-10)
\put(7077,1069){\makebox(0,0)[lb]{\smash{{\SetFigFont{9}{10.8}{\familydefault}{\mddefault}{\updefault}$a,b$}}}}
\put(10002.500,1135.929){\arc{394.717}{2.4948}{6.9299}}
\blacken\thicklines
\path(10162.033,1157.097)(10160.000,1017.000)(10233.998,1135.977)(10162.033,1157.097)
\thinlines
\put(11127.500,1120.929){\arc{394.717}{2.4948}{6.9299}}
\blacken\thicklines
\path(11287.033,1142.097)(11285.000,1002.000)(11358.998,1120.977)(11287.033,1142.097)
\thinlines
\put(3041.500,1135.929){\arc{394.717}{2.4948}{6.9299}}
\blacken\thicklines
\path(3201.033,1157.097)(3199.000,1017.000)(3272.998,1135.977)(3201.033,1157.097)
\thinlines
\put(4181.500,1113.929){\arc{394.717}{2.4948}{6.9299}}
\blacken\thicklines
\path(4341.033,1135.097)(4339.000,995.000)(4412.998,1113.977)(4341.033,1135.097)
\thinlines
\put(806.500,1098.929){\arc{394.717}{2.4948}{6.9299}}
\blacken\thicklines
\path(966.033,1120.097)(964.000,980.000)(1037.998,1098.977)(966.033,1120.097)
\thinlines
\put(6739.500,1112.929){\arc{394.717}{2.4948}{6.9299}}
\blacken\thicklines
\path(6899.033,1134.097)(6897.000,994.000)(6970.998,1112.977)(6899.033,1134.097)
\thinlines
\put(8882,710){\ellipse{630}{630}}
\put(10003,717){\ellipse{630}{630}}
\put(6732,702){\ellipse{630}{630}}
\put(7789,720){\ellipse{630}{630}}
\put(11117,709){\ellipse{630}{630}}
\put(11118,707){\ellipse{540}{540}}
\put(5545,694){\ellipse{630}{630}}
\put(5546,696){\ellipse{540}{540}}
\put(817,678){\ellipse{630}{630}}
\put(4164,697){\ellipse{630}{630}}
\put(3035,709){\ellipse{630}{630}}
\put(1900,673){\ellipse{630}{630}}
\path(7573,500)(6988,500)
\blacken\thicklines
\path(7123.000,537.500)(6988.000,500.000)(7123.000,462.500)(7123.000,537.500)
\thinlines
\path(8051,935)(8636,935)
\blacken\thicklines
\path(8501.000,897.500)(8636.000,935.000)(8501.000,972.500)(8501.000,897.500)
\thinlines
\path(9147,927)(9732,927)
\blacken\thicklines
\path(9597.000,889.500)(9732.000,927.000)(9597.000,964.500)(9597.000,889.500)
\thinlines
\path(10272,927)(10857,927)
\blacken\thicklines
\path(10722.000,889.500)(10857.000,927.000)(10722.000,964.500)(10722.000,889.500)
\thinlines
\path(6957,942)(7542,942)
\blacken\thicklines
\path(7407.000,904.500)(7542.000,942.000)(7407.000,979.500)(7407.000,904.500)
\thinlines
\path(4490,717)(5202,717)
\blacken\thicklines
\path(5067.000,679.500)(5202.000,717.000)(5067.000,754.500)(5067.000,679.500)
\thinlines
\path(11127,394)(11127,34)(6717,34)(6717,349)
\blacken\thicklines
\path(6754.500,214.000)(6717.000,349.000)(6679.500,214.000)(6754.500,214.000)
\thinlines
\path(2142,897)(2772,897)
\blacken\thicklines
\path(2637.000,859.500)(2772.000,897.000)(2637.000,934.500)(2637.000,859.500)
\thinlines
\path(3298,912)(3928,912)
\blacken\thicklines
\path(3793.000,874.500)(3928.000,912.000)(3793.000,949.500)(3793.000,874.500)
\thinlines
\path(1017,904)(1647,904)
\blacken\thicklines
\path(1512.000,866.500)(1647.000,904.000)(1512.000,941.500)(1512.000,866.500)
\thinlines
\path(1655,492)(1070,492)
\blacken\thicklines
\path(1205.000,529.500)(1070.000,492.000)(1205.000,454.500)(1205.000,529.500)
\thinlines
\path(12,687)(499,687)
\blacken\thicklines
\path(379.000,657.000)(499.000,687.000)(379.000,717.000)(379.000,657.000)
\thinlines
\path(4175,372)(4175,12)(754,20)(754,335)
\blacken\thicklines
\path(791.500,200.000)(754.000,335.000)(716.500,200.000)(791.500,200.000)
\thinlines
\path(5884,709)(6394,709)
\blacken\thicklines
\path(6259.000,671.500)(6394.000,709.000)(6259.000,746.500)(6259.000,671.500)
\thinlines
\path(5547,1024)(5547,1699)(7797,1699)(7797,1069)
\blacken\thicklines
\path(7759.500,1204.000)(7797.000,1069.000)(7834.500,1204.000)(7759.500,1204.000)
\put(6649,630){\makebox(0,0)[lb]{\smash{{\SetFigFont{9}{10.8}{\rmdefault}{\mddefault}{\updefault}$0$}}}}
\put(7707,637){\makebox(0,0)[lb]{\smash{{\SetFigFont{9}{10.8}{\rmdefault}{\mddefault}{\updefault}$1$}}}}
\put(8816,630){\makebox(0,0)[lb]{\smash{{\SetFigFont{9}{10.8}{\rmdefault}{\mddefault}{\updefault}$2$}}}}
\put(9919,638){\makebox(0,0)[lb]{\smash{{\SetFigFont{9}{10.8}{\rmdefault}{\mddefault}{\updefault}$3$}}}}
\put(11036,638){\makebox(0,0)[lb]{\smash{{\SetFigFont{9}{10.8}{\rmdefault}{\mddefault}{\updefault}$4$}}}}
\put(8210,1064){\makebox(0,0)[lb]{\smash{{\SetFigFont{9}{10.8}{\familydefault}{\mddefault}{\updefault}$b$}}}}
\put(9333,1063){\makebox(0,0)[lb]{\smash{{\SetFigFont{9}{10.8}{\familydefault}{\mddefault}{\updefault}$b$}}}}
\put(10452,1063){\makebox(0,0)[lb]{\smash{{\SetFigFont{9}{10.8}{\familydefault}{\mddefault}{\updefault}$b$}}}}
\put(5456,607){\makebox(0,0)[lb]{\smash{{\SetFigFont{9}{10.8}{\rmdefault}{\mddefault}{\updefault}$s$}}}}
\put(7057,215){\makebox(0,0)[lb]{\smash{{\SetFigFont{9}{10.8}{\familydefault}{\mddefault}{\updefault}$a,c$}}}}
\put(1790,634){\makebox(0,0)[lb]{\smash{{\SetFigFont{9}{10.8}{\rmdefault}{\mddefault}{\updefault}$q_1$}}}}
\put(2922,648){\makebox(0,0)[lb]{\smash{{\SetFigFont{9}{10.8}{\rmdefault}{\mddefault}{\updefault}$q_2$}}}}
\put(4084,625){\makebox(0,0)[lb]{\smash{{\SetFigFont{9}{10.8}{\rmdefault}{\mddefault}{\updefault}$q_3$}}}}
\put(1116,1016){\makebox(0,0)[lb]{\smash{{\SetFigFont{9}{10.8}{\familydefault}{\mddefault}{\updefault}$a,b$}}}}
\put(716,620){\makebox(0,0)[lb]{\smash{{\SetFigFont{9}{10.8}{\rmdefault}{\mddefault}{\updefault}$q_0$}}}}
\put(5887,866){\makebox(0,0)[lb]{\smash{{\SetFigFont{9}{10.8}{\familydefault}{\mddefault}{\updefault}$c$}}}}
\put(4805,897){\makebox(0,0)[lb]{\smash{{\SetFigFont{9}{10.8}{\familydefault}{\mddefault}{\updefault}$\eps$}}}}
\put(3530,1046){\makebox(0,0)[lb]{\smash{{\SetFigFont{9}{10.8}{\familydefault}{\mddefault}{\updefault}$a$}}}}
\put(1146,169){\makebox(0,0)[lb]{\smash{{\SetFigFont{9}{10.8}{\familydefault}{\mddefault}{\updefault}$b,c$}}}}
\put(2826,1437){\makebox(0,0)[lb]{\smash{{\SetFigFont{9}{10.8}{\familydefault}{\mddefault}{\updefault}$b,c$}}}}
\put(3958,1429){\makebox(0,0)[lb]{\smash{{\SetFigFont{9}{10.8}{\familydefault}{\mddefault}{\updefault}$b,c$}}}}
\put(741,1414){\makebox(0,0)[lb]{\smash{{\SetFigFont{9}{10.8}{\familydefault}{\mddefault}{\updefault}$c$}}}}
\put(2315,1039){\makebox(0,0)[lb]{\smash{{\SetFigFont{9}{10.8}{\familydefault}{\mddefault}{\updefault}$a$}}}}
\put(9762,1459){\makebox(0,0)[lb]{\smash{{\SetFigFont{9}{10.8}{\familydefault}{\mddefault}{\updefault}$a,c$}}}}
\put(8645,1451){\makebox(0,0)[lb]{\smash{{\SetFigFont{9}{10.8}{\familydefault}{\mddefault}{\updefault}$a,c$}}}}
\put(6584,1400){\makebox(0,0)[lb]{\smash{{\SetFigFont{9}{10.8}{\familydefault}{\mddefault}{\updefault}$c$}}}}
\put(9208,155){\makebox(0,0)[lb]{\smash{{\SetFigFont{9}{10.8}{\familydefault}{\mddefault}{\updefault}$b,\eps$}}}}
\put(10840,1478){\makebox(0,0)[lb]{\smash{{\SetFigFont{9}{10.8}{\familydefault}{\mddefault}{\updefault}$a,c$}}}}
\put(2338,147){\makebox(0,0)[lb]{\smash{{\SetFigFont{9}{10.8}{\familydefault}{\mddefault}{\updefault}$a$}}}}
\put(5682,1384){\makebox(0,0)[lb]{\smash{{\SetFigFont{9}{10.8}{\familydefault}{\mddefault}{\updefault}$a,b$}}}}
\thinlines
\put(8884.500,1120.929){\arc{394.717}{2.4948}{6.9299}}
\blacken\thicklines
\path(9044.033,1142.097)(9042.000,1002.000)(9115.998,1120.977)(9044.033,1142.097)
\end{picture}
}
\end{center}
\caption{Witness $\cN$ for $\cT_4(a,b,c)(\cT_5(b,a,c))^*$.} 
\label{fig:KLstar}
\end{figure}

\begin{theorem}[$KL^*$]
Let $K_m=T_m(a,b,c)$, and $L_n=T_n(b,a,c)$.
For $m, n\ge 3$, the complexity of $K_mL_n^*$ is  $m(2^{n-1}+2^{n-2})-2^{n-2}$.
\end{theorem}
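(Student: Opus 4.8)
The plan is to realize $K_mL_n^*$ by the NFA $\cN$ of Fig.~\ref{fig:KLstar}: take the DFA of $K_m=T_m(a,b,c)$, attach the NFA of $(T_n(b,a,c))^*$ built by adding a start-and-final state $s$ that copies the transitions of $0$ and has $\eps$-transitions from the final state of $L_n$ back to $0$, and join the final state $m-1$ of $K_m$ to $s$ by an $\eps$-transition. Determinizing, the reachable states of the product $\cP$ have the form $(i,S)$ with $i\in\{0,\ldots,m-1\}$ and $S\subseteq\{0,\ldots,n-1\}$, where $s$ occurs exactly when $i=m-1$. Since $m(2^{n-1}+2^{n-2})-2^{n-2}$ is the tight upper bound of Cui, Gao, Kari and Yu~\cite{CGKY_IJFCS12}, it suffices to produce that many reachable, pairwise distinguishable states. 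First I would pin down the allowable states: the $\eps$-closure forces $0\in S$ whenever $n-1\in S$, so for each $i\le m-2$ there are $2^{n-1}+2^{n-2}$ admissible $S$ (counting $\emp$); for $i=m-1$ the spawned state $s$ is always accepting and acts on every input exactly like $0$, so two states differing only in whether $0\in S$ coincide, leaving $2^{n-1}$ classes. The total is $(m-1)(2^{n-1}+2^{n-2})+2^{n-1}=m(2^{n-1}+2^{n-2})-2^{n-2}$, the target.

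The engine of the argument is the permutational choice $L_n=T_n(b,a,c)$, which makes the $m$-cycle $a$ of $K_m$ act on $L_n$ only as the transposition $(0,1)$, and the $n$-cycle $b$ of $L_n$ act on $K_m$ only as the transposition $(0,1)$. For reachability I would build the $L_n^*$-subset one token at a time: driving $K_m$ around its cycle to $m-1$ deposits the fresh accepting state $s$, hence a new element, into the $L_n^*$-part, and between deposits the partially built subset can be rotated into position by $b$, whose effect on $K_m$ is only the mild transposition $(0,1)$. Iterating yields every allowable subset of $Q_L$; the $K_m$-coordinate is then moved to the required $i$ by a power of $a$, whose effect on the finished subset is a single transposition. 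The states $(i,\emp)$ with $i\le m-2$ are reached along the initial run of $K_m$ before it first meets $m-1$, and the states with $i=m-1$ are the $2^{n-1}$ classes already described.

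For distinguishability I would argue by cases. Two states with $S\neq S'$ are separated by picking $k\in S\bs S'$ (or in $S'\bs S$), applying $b^{\,n-1-k}$ to carry $k$ to the final state $n-1$ of $L_n$, and then — after a mild power of $a$, which fixes $n-1$ and keeps the $K_m$-coordinate off $m-1$ — observing that the copy now containing $n-1$ is accepting while the other is not. Two states sharing the same $S$ but with distinct $i$ are separated by a power of $a$ sending exactly one of the two $K_m$-states to $m-1$. The empty-subset states and the merged $i=m-1$ states are then told apart by their acceptance behaviour, completing the count.

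The hard part will be the coupling of the two coordinates through the shared alphabet. Because $K_m$ reaching $m-1$ instantly spawns the accepting state $s$ and makes the whole configuration accepting, every reaching or distinguishing word must steer the $K_m$-coordinate so that it does not stray through $m-1$ while the word does its real work on the $L_n^*$-subset, and conversely the subset manipulations must not scramble the $K_m$-position beyond what a controlled power of $a$ can repair. Verifying, for each required subset and each $K_m$-state — including the boundary $i=m-1$ with its $2^{n-2}$ collapse — that such words exist is where the genuine effort lies; the permutational design of the witnesses is exactly what keeps this simultaneous bookkeeping manageable.
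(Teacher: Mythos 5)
Your construction, your count of allowable states (including the $2^{n-2}$ collapse of classes at $i=m-1$), and your token-injection plan for reachability all coincide in substance with the paper's proof, and the reachability mechanism, though left at plan level, is workable. The genuine gap is in distinguishability. Your recipe for two states sharing the same subset $S$ but having distinct $K_m$-coordinates --- ``a power of $a$ sending exactly one of the two $K_m$-states to $m-1$'' --- fails whenever $n-1\in S$. Both states are then already accepting, and they remain accepting after any power of $a$: on the $L_n^*$-side, $a$ acts only as the transposition $(0,1)$, which fixes $n-1$, and $\eps$-closures only ever \emph{add} $0$, so $n-1$ survives in both subsets. The state you park at $q_{m-1}$ accepts via $s$, but the other still accepts via $n-1$; acceptance does not separate them, and your argument stops there. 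A related defect arises when the two subsets differ but one state is parked at $q_{m-1}$: while parked, every application of $b$ re-injects a token at $1$ (the image of $s$), and if you pick $k=0$ as the separating element, that injected token lands exactly on $n-1$ after $b^{n-1}$, so both states end up containing $n-1$ and both stay final even after your ``mild power of $a$.'' You must insist on $k>0$, which is possible precisely because the pairs with symmetric difference $\{0\}$ at $q_{m-1}$ were merged away in your count --- but your recipe as written never says this.

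The missing idea is the paper's distinguishing word $cb^{n-2}$, which makes acceptance depend only on the $K_m$-side. Applying $c$ first empties position $1$ of the subset (on the $L$-side $c$ merges $1$ into $0$; on the $K$-side it fixes every $q_i$ except $q_1$, and in particular fixes $q_{m-1}$). Then $b^{n-2}$ rotates so that no token of the non-parked state can reach $n-1$: a surviving token at position $x\neq 1$ lands at $x+n-2 \not\equiv n-1$, and a closure-added $0$ created after $t$ steps ends at $n-2-t<n-1$; meanwhile the non-parked coordinate stays inside $\{q_0,\ldots,q_{m-2}\}$ under $b$ and $c$, so no new tokens are injected. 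The parked state, by contrast, keeps $q_{m-1}$ and hence $s$ under both $b$ and $c$, so it remains final throughout. Thus the pairs your recipe misses are handled by first parking one of the two states with a power of $a$ (possible without passing through $q_{m-1}$ in the other coordinate) and then applying $cb^{n-2}$. Without some such step that flushes $n-1$ out of one subset while holding the other state final, the distinguishability half of your proof does not go through.
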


\begin{proof}
Let $\cD_1=(Q_1,\Sig,\delta_1,q_0,\{q_{m-1}\})$ with $Q_1=\{q_0,\ldots,q_{m-1}\}$ be the DFA of $K_m$, and let
$\cD_2=(Q_2,\Sig,\delta_2,0,\{n-1\})$ with $Q_2=\{0,\ldots,{n-1}\}$ be the DFA of $L_n$.
Let $\cN_2$ be the NFA for $L_n^*$, and let $\cN$ be the NFA for the product $K_mL_n^*$.
Figure~\ref{fig:KLstar} shows our witnesses $\cT_4(a,b,c)$ and $\cT_5(b,a,c)$ and the NFA $\cN$ for $KL^*$.
We perform the subset construction and minimization of $\cN$ to obtain the DFA $\cP$ for the product $KL^*$. 
 
The states of $\cP$ are subsets of $Q_1\cup Q_2\cup \{s\}$.
Note that $q_{m-1}$ cannot appear in a state of  $\cP$ without $s$, and 
\emph{vice versa}.
Also, $n-1$ cannot appear without 0, but 0 can appear without $n-1$.
Each state of $\cD$ must contain exactly one of $\{q_0\},\ldots,\{q_{m-2}\}$ or
$\{q_{m-1},s\}$,
and either a (possibly empty) subset of $Q_2$ not containing $n-1$, or subset of $Q_2$ containing both  $n-1$ and $0$.
Hence there are at most $m(2^{n-1}+2^{n-2})$ reachable subsets; we now show that all these subsets can be reached.

Set $\{q_0\}$ is the initial state of $\cP$,
set $\{q_i\}$ for $i \le m-2$ is reached by $a^i$, and $\{q_{m-1},s\}$, by $a^{m-1}$.

Suppose all allowable states of the form $\{q_{m-1},s\} \cup S$, $|S| \le k$, $k \ge 0$, are reachable.
Let $S \subseteq Q_2$, $|S| = k+1$.
If $1 \in S$ and $0 \notin S$, then we have $\{q_{m-1},s\} \cup (S \bs \{1\}) \der{a} \{q_0\} \cup S$.
If $0,1 \in S$, then $\{q_{m-1},s\} \cup (S \bs \{0\}) \der{a} \{q_0\} \cup S$.
If $0 \in S$ and $1 \notin S$, then $\{q_{m-1},s\} \cup (S \bs \{0\}) \der{ac} \{q_0\} \cup S$.
Therefore all states $\{q_0\} \cup S$, $|S| = k+1$, and either $0 \in S$ or $1 \in S$, are reachable.
Every state $\{q_0\} \cup S$, where $n-1 \notin S$, is reachable by an even number of $b$'s from a state containing either 0 or 1.
Every $S = \{0, i_1, \dots, i_{k-1}, n-1\}$ is also reachable in this way (by mapping either 0 or 1 to $i_1$).
So all states $\{q_0\} \cup S$, $|S| = k+1$, are reachable.
By applying cyclic shifts $a^i$, all states $\{q_i\} \cup S$, $i < m-1$ and $\{q_{m-1}, s\} \cup S$ are reachable.

Any state of the form $\{q_{m-1},s\}\cup T$, where $T\subset Q_L \bs \{0,n-1\}$, is equivalent
to $\{q_{m-1},s,0\}\cup T$, as they are both final and are mapped to the same state under any input. 
So the number of distinguishable states of $\cD$ is at most
$m(2^{n-1}+2^{n-2})-2^{n-2}$.
We  prove that there are precisely that many distinguishable states.

Consider two states of the form $\{q_i\} \cup S$, $\{q_{m-1},s\} \cup T$, where $i < m-1$. 
These states are distinguished by $cb^{n-2}$.
Any pair $\{q_i\} \cup S$, $\{q_j\} \cup T$, $i \neq j$ can by transformed into states of this form by applying a cyclic shift.
Now consider $\{q_i\} \cup S$, $\{q_i\} \cup T$, $S \neq T$, $i < m-1$.
There exists a cyclic shift $b^k$ which transforms the states so that $n-1 \in S \oplus T$, and this distinguishes the states.

Then the only remaining case is $\{q_{m-1},s\} \cup S$, $\{q_{m-1},s\} \cup T$, and $S \neq T$.
As we stated earlier, if $S \oplus T = \{0\}$ then the states are indistinguishable.
Otherwise, let $k \in S \oplus T$, $k > 0$.
Apply $b^{n-1-k}$ so that $n-1 \in S \oplus T$.
Then applying $a$ to map $\{q_{m-1},s\}$ to $\{q_0, 1\}$ distinguishes the states.
\qed
\end{proof}

\subsection{The Language $K^*L$}

Cui, Gao, Kari and Yu~\cite{CGKY_TCS12b} proved using quaternary witnesses that the complexity of $K^*L$ is
$5\cdot 2^{m+n-3}-2^{m-1}-2^n+1$. Let $\Sig=\{a,b,c,d\}$.
For $K$ they used
$\cD_K=(Q_K,\Sig,\delta_K,q_0,\{m-1\})$, with $Q_K=\{q_0,\ldots,q_{m-1}\}$,
$a:(q_0,\ldots,q_{m-1})$,
$\delta_K(q_0,b)=q_0$, $\delta_K(q_{i},b)=i+1 \text{ mod } m$  for $i=1,\ldots,m-1$, 
 and
$c,d:\one_{Q_K}$.
For $L$, their witness was  $\cD_L=(Q_L,\Sig,\delta_L,0,\{n-1\})$, with $Q_L=\{0,\ldots,n-1\}$, 
$a,b:\one_{Q_L}$,
$c:(0,\ldots n-1)$,
$d:{Q_L \choose 0}$.
We show here that two quaternary permutationally equivalent languages also work.

\begin{figure}[h]
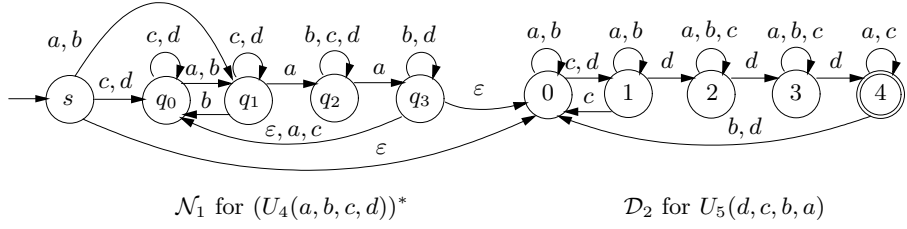

\begin{center}
\input KstarL.eepic
\end{center}
\caption{NFA $\cN$ for $(U_4(a,b,c,d))^*\;U_5(d,c,b,a)$.} 
\label{fig:KstarL}
\end{figure}

\begin{theorem}[$K^*L$]
\label{thm:KStarL}
Let $K_m = U_m(a,b,c,d)$ and $L_n = U_n(d,c,b,a)$. For $m,n \ge 3$, the complexity of $K_m^*L_n$ is $5 \cdot 2^{m+n-3} - 2^{m-1} - 2^n + 1$.
\end{theorem}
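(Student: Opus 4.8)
The plan is to follow the subset-construction-and-minimization route used in the preceding theorems, applied to the NFA $\cN$ for $K_m^*L_n$ of Fig.~\ref{fig:KstarL}. I would first build $\cN$: apply the star construction to $\cD_1$ (a new initial-and-final state $s$ carrying the outgoing transitions of $q_0$, plus $\eps$-transitions from $q_{m-1}$ to $q_0$), then add $\eps$-transitions from the final states of $K_m^*$, namely $s$ and $q_{m-1}$, to state $0$ of $\cD_2$. After determinizing, every reachable state is a pair $(S,T)$ with $S\subseteq Q_1$, $T\subseteq Q_2$, and $s$ occurring only in the initial pair $(\{s\},\{0\})$. Since every letter acts as a total transformation on each component, $T$ is always non-empty; the star construction forces $q_0\in S$ whenever $q_{m-1}\in S$; and the $\eps$-transitions into $\cD_2$ force $0\in T$ whenever $q_{m-1}\in S$. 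Counting admissible pairs yields $(2^{m-1}-1)(2^{n}-1)$ with $q_{m-1}\notin S$ and $2^{m-2}\,2^{n-1}$ with $q_{m-1}\in S$, i.e. exactly $5\cdot 2^{m+n-3}-2^{m-1}-2^{n}+1$, which matches the known upper bound for these witnesses. I would also observe that $(\{s\},\{0\})$ and $(\{q_0\},\{0\})$ evolve identically — the injection of $0$ depends only on the \emph{next} $K$-state, and acceptance depends only on $T$ — so the initial pair contributes no extra class.

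The decisive feature of the witnesses is that $a$ is a cycle on $\cD_1$ but the identity on $\cD_2$, that $d$ is a cycle on $\cD_2$ but the identity on $\cD_1$, and that $c^2$ acts as the singular map $\binom{q_{m-1}}{q_0}$ on $\cD_1$ while being the identity on $\cD_2$. For reachability I would use two stages. First, since $U_m(a,b,\emp)$ already meets the star bound, all admissible $K_m^*$-subsets $S$ are reachable by words in $\{a,b\}^*$; as both $a$ and $b$ fix state $0$ of $\cD_2$, these words keep $T=\{0\}$, so every pair $(S,\{0\})$ is reached. Second, I would enlarge $T$ by induction on $|T|$: when $q_{m-1}\in S$ each application of $d$ rotates $T$ and re-injects $0$, so $d$ grows $T$, while $c$ (a transposition on $\cD_2$) places the remaining elements; the perturbation $b$ and $c$ cause on $S$ is then repaired with $a$ and $c^2$. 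The main obstacle is exactly this coupling: the only cheap independent moves are the two cycles $a$ and $d$, whereas changing the size or shape of a subset forces $b,c$, which act non-trivially on both machines. I expect reaching a prescribed $(S,T)$ while controlling the side effects on the other component to be the longest part, carried out by an induction on subset sizes exactly as in the $KL^*$ and $K^*\circ L^*$ arguments.

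For distinguishability the final pairs are those with $n-1\in T$. Two pairs with $T_1\neq T_2$ I would separate by picking $k\in T_1\oplus T_2$ and applying $d^{\,n-1-k}$: being the identity on $\cD_1$ it leaves $S$ fixed and carries $k$ to the final state $n-1$ of $\cD_2$ while keeping the other pair non-final, and one checks that the $0$'s injected along the way land at positions $n-1-k-t$ and so never reach $n-1$. Two pairs with $S_1\neq S_2$ but $T_1=T_2=T$ I would reduce to the previous case: choosing $q_i\in S_1\oplus S_2$ and applying $a^{\,m-1-i}$ puts $q_{m-1}$ into exactly one $K$-part, injecting $0$ into that copy of $T$ and creating a difference in the second component. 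The subtle subcase is $0\in T$, where the injection is invisible; here I would rotate with $d$, keeping one part with $q_{m-1}$ present so that it keeps absorbing rotated $0$'s while the other does not, until the two second components differ, and then finish as in the $T_1\neq T_2$ case. The delicate point is the tie $T=Q_2$ (or more generally $n-1\in T$), where rotation preserves both saturation and acceptance; I would break it by first applying $b$, which desaturates $T$ on $\cD_2$ yet, being the transposition $(q_0,q_1)$, fixes $q_{m-1}$ and so preserves which part is in which type. This gives enough room to separate all states, completing the argument.
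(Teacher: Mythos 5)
Your NFA construction, the classification and count of allowable pairs (including the correct $2^{m-2}\,2^{n-1}$ for the $q_{m-1}\in S$ class), the merging of $(\{s\},\{0\})$ with $(\{q_0\},\{0\})$, and the whole distinguishability argument all match the paper's proof; in fact you are more careful than the paper about the $0$'s injected into $T$ when $q_{m-1}$ sits in the first component, and your ``apply $b$ to desaturate, then $d$'' device for the tie $n-1\in T$ is exactly the paper's $bd$ step. The gap is in reachability. Stage 1 (all pairs $(S,\{0\})$ via $\{a,b\}^*$-words, since those letters fix $0$ in $\cD_2$) is solid, but stage 2 --- which is the bulk of the paper's proof --- is never carried out: you name the tools ($d$-growth while $q_{m-1}\in S$, the transposition $c$ on $Q_2$, repair of $S$ by $a$ and $c^2$) and then assert that an induction goes through ``exactly as in the $KL^*$ and $K^*\circ L^*$ arguments.'' It does not transfer verbatim, because the side-effect structure here is the whole difficulty: growing $T$ forces an element of $S$ through $q_{m-1}$ (or holds it there), which perturbs $S$, while rebuilding $S$ afterwards cannot use $b$ (it collapses $n-1$ to $0$ in $T$), and using $a$ alone injects $0$ into $T$ at every crossing of $q_{m-1}$. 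You explicitly flag this conflict (``controlling the side effects on the other component'') but do not resolve it, so the claim that every allowable pair is reachable remains unproven.

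For the record, the paper resolves it by induction on $|T|$ with explicitly chosen pre-images: to reach $S\cup T$ with $0\in T$ and $q_{m-1}\notin S$, it starts from $S'\cup T'$ where $|T'|=|T|-1$, $0\notin T'$, and $S'$ is the target $S$ shifted so that its least element sits at $q_{m-2}$; then $ac^2$ pushes that element through $q_{m-1}$ (injecting $0$ into $T$) and immediately erases $q_{m-1}$ without touching $T$, and a final $a^{i_1}$ slides $S$ into place without any element crossing $q_{m-1}$; sets with $0\notin T$ come from $d$-shifts, and the $q_{m-1}\in S$ states from one last $a$. Your alternative plan is salvageable, but only via an observation you never state: once $q_{m-1}\notin S$, every word in $\{c,d\}^*$ leaves $S$ fixed (as $c$ acts on $\cD_1$ only by $q_{m-1}\mapsto q_0$ and $d$ is the identity there) while acting on $T$ as the full symmetric group generated by the cycle $d$ and the transposition $c$. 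With that, ``grow a block of the right size with $d$ while holding $q_{m-1}$, apply $c$ once, permute $T$ into place, then fix $S$ with $a$'s and $c^2$'' does work, provided you also check that the $S$-repair either precedes the final placement of $T$ or only injects $0$ when $0$ is already in the target $T$. Those two checks are precisely the missing content of your stage 2.
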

\begin{proof}
Let $\cD_1=(Q_1,\Sig,\delta_1,q_0,\{q_{m-1}\})$ with $Q_1=\{q_0,\ldots,q_{m-1}\}$ be the DFA of $K_m$, and let
$\cD_2=(Q_2,\Sig,\delta_2,0,\{n-1\})$ with $Q_2=\{0,\ldots,{n-1}\}$ be the DFA of $L_n$.
Let $\cN_1$ be the NFA for $K_m^*$, and let $\cN$ be the NFA for the product $K_m^*L_n$.
We perform the subset construction and minimization of $\cN$ to obtain the DFA $\cP$ for the product $K^*L$. 
The construction is illustrated in Fig.~\ref{fig:KstarL}.

Owing to the $\eps$-transitions, the allowable states of the DFA are $\{s,0\}$, all 
$(2^{m-1}-1)(2^n-1)$ subsets of the form $S \cup T$ where $\emp \subsetneq S \subseteq Q_1$, , $q_{m-1}\notin S$,
$\emp \subsetneq T \subseteq Q_2$, 
and all $(2^{m-2}-1)(2^{n-1}-1)$ subsets of the form $S \cup T$, where 
 $q_{0},q_{m-1} \in S \subseteq Q_1$  and $0 \in T\subseteq Q_2$.
There are $5 \cdot 2^{m+n-3} - 2^{m-1} - 2^n + 2$ such subsets and we will now show that they are all reachable.

The initial state of $\cP$ is $\{s,0\}$.
It is known from \cite{Brz12} that all allowable subsets of $\cN_1$ are reachable by words in $\{a,b\}^*$. 
These inputs all map $0$ to itself, and hence all allowable states of the form $S \cup \{0\}$ are reachable.

If $q_{m-1} \notin S$ and $T = \{t_1, \dots t_k\}$, then $S \cup \{0, t_2 - t_1, \dots, t_k - t_1\} \der{d^{t_1}} S \cup T$.
Let $T = \{0, t_1, \dots, t_k\}$, $0 < t_1 < \cdots < t_k$, and 
 $S = \{q_{i_1}, \dots, q_{i_l}\}$, $i_1 < \cdots < i_l < m-1$.
Also, let $S' = \{q_{i_2 - i_1 - 1}, \dots, q_{i_l - i_1 -1}, q_{m-2}\}$ and $T' = \{t_1, \dots, t_k\}$.
Then $$S' \cup T' \der{ac^2} \{0, q_{i_2 - i_1}, \dots, q_{i_l - i_1}\} \cup T \der{a^{i_1}} S \cup T.$$
Moreover, $S \cup \{t_0, t_1 + t_0 , \dots, t_k + t_0\}$ can be reached from $S \cup T$ by $d^{t_0}$.
Combining these results shows that all allowable states $S \cup T$ with $q_{m-1} \notin S$ are reachable.
Finally, if $S = \{q_0, q_{i_1}, \dots, q_{i_k}, q_{m-1} \}$, and $0 \in T$, then $\{q_{i_1 - 1}, \dots, q_{i_l - 1}, q_{m-2}\} \cup T \der{a} S \cup T$.
Therefore all allowable states are reachable.

For distinguishability, first consider states $S_1 \cup T_1$, $S_2 \cup T_2$.
If $T_1 \neq T_2$, then applying a cyclic shift $d^k$ transforms the states so that $n-1 \in T_1 \oplus T_2$, distinguishing the states.
If $S_1 \neq S_2$, apply a cyclic shift $a^k$ so that $q_{m-1} \in S_1 \oplus S_2$.
Then apply $bd$ so that $0 \in T_1 \oplus T_2$, and the states are distinguishable by the previous case.

Finally, the initial state $\{s\} \cup \{0\}$ is indistinguishable from $\{q_0\} \cup \{0\}$, as any non-empty input transforms these two states into the same state.
So then there are $5 \cdot 2^{m+n-3} - 2^{m-1} - 2^n + 1$ distinguishable states.
\qed
\end{proof}

\subsection{The Language $K^*L^*$}
The combined operation $K^*L^*$ appears not to have been studied before. 
\begin{proposition}
\label{prop:KstarLstar}
The complexity of the operation $K_m^*L_n^*$
is at most 
$2^{m+n-1} - 2^{m-1} - 3 \cdot 2^{n-2} + 2$ for $m,n \ge 3$. 
\end{proposition}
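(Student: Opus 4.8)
The plan is to follow the same NFA-to-DFA machinery used in Propositions~\ref{prop:bound1} and~\ref{prop:bound2}, combining the product construction (as in $KL^*$) with the double-star construction (as in $K^*\circ L^*$). First I would let $\cD_1$ and $\cD_2$ be the DFA's for $K_m$ and $L_n$, form the NFA $\cN_1$ for $K_m^*$ by adding a fresh initial-and-final state $s_1$ (with the outgoing transitions of $0$ and $\eps$-transitions from each final state back to $0$), and then append $\cN_2$ for $L_n^*$, built the same way with its own state $s_2$, by placing $\eps$-transitions from every final state of the $K_m^*$ part to the start of the $L_n^*$ part. Let $\cN$ be the resulting NFA for $K_m^*L_n^*$ and let $\cP$ be its determinization.

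**Counting the allowable subsets.**
The states of $\cP$ are subsets of $\{s_1\}\cup Q_1\cup\{s_2\}\cup Q_2$. The key bookkeeping is to characterize which subsets can occur, exactly as in the earlier proofs. The $K_m^*$-component behaves like the first factor of a product: once any nonempty word has been read we leave the initial state, so the $Q_1$-part ranges over the starred-language subsets of $\cS_1$, of which there are at most $2^{m-1}+2^{m-2}-1$. The $L_n^*$-component is forced on by the $\eps$-transitions whenever a final state of the $K_m^*$ block is entered, so its $Q_2$-part is either empty (before the product ``spills over'') or one of the starred-language subsets of $\cS_2$. I would isolate the initial state $\{s_1,s_2\}$ (contributing the ``$+2$'' together with the empty-$T$ count), count the states whose $L_n^*$-part is empty, and count those whose $L_n^*$-part is a nonempty starred subset, then check that the total collapses to $2^{m+n-1}-2^{m-1}-3\cdot 2^{n-2}+2$. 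The arithmetic should factor as a product of a $K$-count and an $L$-count plus corrections for the states reachable before and at the moment of the $\eps$-crossover.

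**The main obstacle.**
The delicate part is \emph{not} the determinization but the precise enumeration of admissible $(S,T)$ pairs, because the $\eps$-transitions couple the two components: a subset $T\subseteq Q_2$ may appear only in conjunction with $S$-subsets that have already ``activated'' the product link, and conversely the crossover forces $0\in T$ (or the presence of the $L_n^*$ initial behavior) exactly when $S$ meets $F_1$. Getting the inclusion-exclusion right—so that states double-counted across the ``$T=\emp$'' and ``$T\neq\emp$'' regimes, and the special initial state, are each counted once—is where the stated constants $2^{m-1}$, $3\cdot 2^{n-2}$, and $+2$ come from. Once the admissible family is pinned down, the bound follows by summing $|S|$-choices against $|T|$-choices and simplifying; since this is an upper-bound proposition, I would not need reachability or distinguishability here, only a clean count of the subsets that the subset construction can possibly produce.
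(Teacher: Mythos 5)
Your general machinery (adding the star states $s_1,s_2$, linking the two NFA's by $\eps$-transitions, determinizing, and counting allowable subsets) is indeed the paper's approach, but two of your structural claims are wrong, and one of them is fatal to the count. First, there is no ``empty-$T$ regime before the product spills over'': since $\eps\in K_m^*$, the state $s_1$ is final, so $s_2$ already lies in the $\eps$-closure of the initial state, and because $\cD_2$ is complete the $L_n^*$-component can never become empty afterwards. The correct case split is not ``$T=\emp$ versus $T\neq\emp$'' but rather: whether $S$ meets $F_1$ (which forces $s_2$ into $T$) and whether $T$ meets $F_2$ (which forces $0$ into $T$); this yields three families of pairs $(S,T)$ plus the initial state $\{s_1,s_2\}$, and these are what sum to $2^{m+n-1}-2^{m-1}-3\cdot 2^{n-2}+2$.

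Second, and more seriously, you explicitly assert that no distinguishability argument is needed, ``only a clean count of the subsets that the subset construction can possibly produce.'' That is exactly where the proof fails. The $\eps$-transition from a final state of the $K^*$-part injects $s_2$ into $T$, \emph{not} $0$, so the subset construction produces states containing $s_2$ but not $0$. If you count all such subsets as distinct, the family with $q_{m-1}\in S$ contributes about $2^{m-2}\bigl(2^{n-1}+2^{n-2}-1\bigr)$ states rather than the $2^{m+n-3}$ the bound allows, so a raw subset count proves only a strictly weaker upper bound. The paper closes this gap with an equivalence observation: any state $R$ containing $s_2$ but not $0$ is equivalent to $R\cup\{0\}$, since both are final (because of $s_2$) and $s_2$ and $0$ have identical outgoing transitions; one may therefore assume that every state containing $s_2$ also contains $0$, which is what collapses the third family to $2^{m-2}\cdot 2^{n-1}=2^{m+n-3}$. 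Without this merging step (or an alternative construction that routes the product's $\eps$-transitions directly to $0$ and dispenses with $s_2$), your outline cannot reach the stated constant.
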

\begin{proof}
Let $\cD_1=(Q_1,\Sig,\delta_1,q_0,F_1)$ with
$Q_1=\{q_0,\ldots,q_{m-1}\}$ be the DFA of $K_m$, and 
let $\cD_2=(Q_2,\Sig,\delta_2,0,F_2)$ with $Q_2=\{0,\ldots,n-1\}$ be the DFA of $L_n$.
Construct NFA's $\cN_1$ and $\cN_2$ accepting $K_m^*$ and $L_n^*$ by adding  new initial states $s_1$ and $s_2$, which are also final. 
Let $\cN$ be the NFA for $K_m^*L_n^*$, and 
let $\cP$ be the DFA  obtained by the subset construction and minimization of $\cN$. These constructions are illustrated in Fig.~\ref{fig:KstarLstar}.

\begin{figure}[hbt]
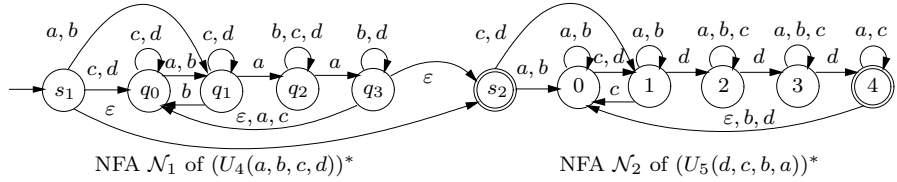

\begin{center}
\input KstarLstar.eepic
\end{center}
\caption{NFA $\cN$  of $(U_4(a,b,c,d))^*(U_5(d,c,b,a))^*$.} 
\label{fig:KstarLstar}
\end{figure}

The initial state of $\cP$ is $\{s_1,s_2\}$.
Note that any state $R$ of $\cP$ containing $s_2$ but not 0, is equivalent to $R\cup \{0\}$,
since both states are final because of $s_2$, and $s_2$ and 0 have identical outgoing transitions.
Hence we can ignore states like $R$ in our counting, and assume that every state containing $s_2$ also contains 0.
Due to the $\eps$-transitions, the allowable states of the DFA are $\{s_1,s_2\}$, and all subsets of the form $S \cup T$, where $\emp \subsetneq S \subseteq Q_1$, $\emp \subsetneq T \subseteq \{s_2\}\cup Q_2$, and fall into one of the following cases:
\bi
\item $S\cap F_1=\emp$, $T\cap F_2=\emp$;
\item $S\cap F_1=\emp$,  $T$ contains at least one state of $F_2$ and 0;
\item $S$ contains at least one state of $F_1$ and $s_2,0 \in T$.
\ei
One verifies that the possible number of states is greatest when there is only one final state, say $q_{m-1}$, in $F_1$ and only one final state, say $n-1$, in $F_2$. Hence we have the cases:
\bi
\item $q_{m-1} \notin S$, $n-1 \notin T$: $(2^{m-1}-1)(2^{n-1}-1)$ states;
\item $q_{m-1} \notin S$, $0, n-1 \in T$: $(2^{m-1}-1)2^{n-2}$ states;
\item $q_0, q_{m-1} \in S$, $s_2,0 \in T$: $2^{m+n-3}$ states.
\ei
Therefore there are a total of $2^{m+n-1} - 2^{m-1} - 3 \cdot 2^{n-2} + 2$ allowable states.
Hence the complexity of  $K_m^*L_n^*$
is at most $2^{m+n-1} - 2^{m-1} - 3 \cdot 2^{n-2} + 2$.\qed
\end{proof}

\begin{theorem}[$K^*L^*$]
Let $K_m = U_m(a,b,c,d)$ and $L_n = U_n(d,c,b,a)$. For $m,n \ge 3$, the complexity of $K_m^*L_n^*$ is $2^{m+n-1} - 2^{m-1} - 3 \cdot 2^{n-2} + 2$.
\end{theorem}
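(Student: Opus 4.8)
The upper bound is already established in Proposition~\ref{prop:KstarLstar}, so the entire task is to show, for the specific witnesses $K_m=U_m(a,b,c,d)$ and $L_n=U_n(d,c,b,a)$, that all $2^{m+n-1}-2^{m-1}-3\cdot 2^{n-2}+2$ allowable states of $\cP$ are reachable and pairwise distinguishable. The plan is to follow closely the argument of Theorem~\ref{thm:KStarL}, because the DFA's $\cD_1$ and $\cD_2$ here are \emph{identical} to those used for $K^*L$; only the right-hand language is now starred. I would therefore first isolate the two features that are new relative to $K^*L$: the NFA $\cN_2$ now carries an $\eps$-loop from $n-1$ back to $0$, so that $n-1\in T$ forces $0\in T$, and it carries the extra initial/final state $s_2$, which by the equivalence noted in Proposition~\ref{prop:KstarLstar} may be treated as $0$ and is present in $T$ exactly when $q_{m-1}\in S$. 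I would also record the two ``pure'' letters that keep the bookkeeping tractable: $a$ is the cycle on $\cD_1$ and the identity on $\cD_2$, while $d$ is the identity on $\cD_1$ and the cycle on $\cD_2$, so $a$ moves only the first coordinate and $d$ only the second.

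For reachability I would work in two stages. Starting from $\{q_1,0\}$, which is reached by $a$ from the initial state $\{s_1,s_2\}$, I would invoke Brzozowski's result~\cite{Brz12} that every allowable subset of $\cN_1$ is reachable by a word in $\{a,b\}^*$; since on $\cD_2$ both $a$ (identity) and $b$ (the singular map ${n-1\choose 0}$) fix the base set $\{0\}$, this reaches every state $S\cup\{0\}$ with $q_{m-1}\notin S$, and the corresponding $S\cup\{s_2,0\}$ when $q_{m-1}\in S$. With the first coordinate held in place, I would then grow and shift the second. The pure cycle $d$ supplies all cyclic shifts of $T$, and two $\eps$-mechanisms raise $|T|$ without recourse to the impure letters $b,c$: routing $S$ once through $q_{m-1}$ by a power of $a$ injects $0$ into $T$ via the catenation $\eps$-transition---exactly the device of Theorem~\ref{thm:KStarL}---while cycling $T$ onto $n-1$ by $d$ injects $0$ via the new $L^*$-loop. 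Interleaving these, in the spirit of the $ac^2a^{i_1}$ computation of Theorem~\ref{thm:KStarL} but now subject to the constraint $n-1\in T\Rightarrow 0\in T$, should reach every admissible $T$ (those with $n-1\notin T$ and those with $\{0,n-1\}\subseteq T$) for each fixed $S$.

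For distinguishability I would again mirror Theorem~\ref{thm:KStarL}. A state is final precisely when its second coordinate meets a final state of $\cN_2$, namely $n-1$ or $s_2$. If two states differ in their $Q_2$-parts, I would take $k$ in the symmetric difference and apply $d^{\,n-1-k}$ to put $n-1$ into exactly one of them; since $d$ fixes the first coordinate, I can beforehand apply a power of $a$ to clear $q_{m-1}$, and hence $s_2$, from the state that must remain non-final. If instead the states differ in their $Q_1$-parts, I would apply a power of $a$ to force $q_{m-1}$ into the symmetric difference and then transfer this difference to the second coordinate (as in the $bd$ step of Theorem~\ref{thm:KStarL}), reducing to the previous case. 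Finally, as in the earlier theorems, one application of $a$ separates the initial state $\{s_1,s_2\}$ from every state except $\{q_0,0\}$, and $\{s_1,s_2\}$ is distinguished from $\{q_0,0\}$ by the empty word, since the former is final (it contains $s_2$) and the latter is not.

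The main obstacle is reaching all admissible second coordinates while keeping $S$ frozen. Unlike $K^*L$, where $T$ ranged over all non-empty subsets of $Q_2$, the $L^*$-loop here both restricts the targets (forcing $0$ whenever $n-1$ is present) and disturbs the shifting step, because a cyclic shift by $d$ that carries some element onto $n-1$ silently injects $0$ into $T$. Arranging these forced insertions so that the orbit lands exactly on the desired admissible set, while at the same time ensuring in the distinguishability argument that finality is governed by the intended coordinate rather than accidentally by $s_2$ (equivalently by $q_{m-1}\in S$), is where the real care lies; the remainder is a routine adaptation of the $K^*L$ argument.
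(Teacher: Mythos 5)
Your skeleton (upper bound from Proposition~\ref{prop:KstarLstar}, reachability and distinguishability patterned on Theorem~\ref{thm:KStarL}, same three-case treatment of the initial state) matches the paper's proof, but your distinguishability argument has a genuine gap. You restrict yourself to the two ``pure'' letters $a$ and $d$, and in the case $T_1\neq T_2$ you propose to ``apply a power of $a$ to clear $q_{m-1}$, and hence $s_2$, from the state that must remain non-final.'' This is impossible in general. Because of the $\eps$-transition from $q_{m-1}$ to $q_0$, powers of $a$ are not pure cyclic shifts on the first coordinate: every pass through $q_{m-1}$ injects $q_0$, so first components only get denser, and a first component equal to $Q_1$ remains $Q_1$ under every word in $\{a,d\}^*$. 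Concretely, consider the two allowable (and, by the theorem, reachable) states $Q_1\cup\{s_2,0\}$ and $Q_1\cup\{s_2\}\cup Q_2$. Both contain $q_{m-1}$, hence $s_2$, hence both are final, and they stay final after every word in $\{a,d\}^*$; no word avoiding $b$ and $c$ can separate them. The missing idea---and the crux of the paper's argument---is to use the singular letters to destroy the finality of the coordinate you are not testing: $c$ acts as ${q_{m-1}\choose q_0}$ on $\cD_1$ (and as the transposition $(0,1)$ on $\cD_2$), so applying $c$ removes $q_{m-1}$, and with it $s_2$, from \emph{both} states simultaneously while preserving membership of $n-1$ in the second components; symmetrically, $b$ acts as ${n-1\choose 0}$ on $\cD_2$ while preserving $q_{m-1}$-membership. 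The paper distinguishes states with $T_1\neq T_2$ by a shift $d^k$ followed by $c$, and states with $S_1\neq S_2$ by a shift $a^k$ followed by $b$; your argument cannot be repaired without reintroducing these letters.

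The reachability half of your proposal is also incomplete, and you say so yourself: ``arranging these forced insertions so that the orbit lands exactly on the desired admissible set'' is exactly the step that requires proof, not a remark. The paper resolves it by exhibiting explicit preimages whose $d$-orbit touches $n-1$ only at the very last step: for $T=\{0,t_1,\dots,t_k,n-1\}$ and $q_{m-1}\notin S$, the state $S\cup\{0,t_2-t_1,\dots,t_k-t_1,n-1-t_1\}$ avoids $n-1$ (so it is reachable by the Theorem~\ref{thm:KStarL} argument) and satisfies
\begin{equation*}
S\cup\{0,t_2-t_1,\dots,t_k-t_1,n-1-t_1\}\ \der{d^{t_1}}\ S\cup T,
\end{equation*}
with no intermediate set meeting $n-1$, so no spurious $0$'s are injected along the way; when $q_{m-1}\in S$, one application of $a$ to a copy of $S$ shifted down by one finishes the job. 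Without such an explicit choice of preimages, ``interleaving'' shifts with $\eps$-insertions is a plan, not a proof.
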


\begin{proof}
Let the various automata be defined as in the proof of Proposition~\ref{prop:KstarLstar}, but this time with $K_m=U_m(a,b,c,d)$ and $L_n=U_n(d,c,b,a)$.
The reachability of all of the states of $\cP$ follows the  proof in Theorem \ref{thm:KStarL} for all states $S \cup T$ where $n-1 \notin T$.
Let $T = \{0, t_1,\dots, t_k, n-1\}$. 
If $q_{m-1} \notin S$, then $S \cup \{0, t_2 - t_1, \dots, t_k - t_1, n-1 - t_1\} \der{d^{t_1}} S \cup T$.
If $q_{m-1} \in S$, say $S = \{q_0, q_{i_1}, \dots, q_{i_l}, q_{m-1}\}$, then $\{q_{i_1} - 1, \dots, q_{i_l - 1}, q_{m-2}\} \cup T \der{a} S \cup T$.
Therefore all allowable states are reachable.

For distinguishability, first consider states $S_1 \cup T_1$, $S_2 \cup T_2$,
where $S_1,S_2\subseteq Q_1$ and $T_1,T_2\subseteq \{s_2\} \cup Q_2$.
The set of final states of the NFA is $\{s_2, n-1\}$; however, any set containing $s_1$ or $q_{m-1}$ also contains $s_2$, and hence is a final state of $\cP$.
Note that applying $c$ always results in a state $S \cup T$, where $q_{m-1},s_2 \notin S$, and applying $b$ causes $n-1 \notin T$.
If $T_1 \neq T_2$, then applying a cyclic shift $d^k$ transforms the states so that $n-1 \in T_1 \oplus T_2$, and then applying $c$ distinguishing the states.
If $S_1 \neq S_2$, apply a cyclic shift $a^k$ so that $q_{m-1} \in S_1 \oplus S_2$, then apply $b$ to distinguish the states.

Finally, consider the initial state $\{s_1,s_2\}$, and any state $R$
not contain $s_1$, since the initial state is
the only one containing $s_1$. There are three cases:
\be
\item
$q_0 \not\in R$: Applying $a$,  from $\{s_1,s_2\}$ we reach $\{q_1,0\}$, and from $R$ we
reach $R'$, where $q_1 \not\in R'$. By the argument in the second 
paragraph of the proof, $\{s_1,s_2\}$ is distinguished from $R$.
\item
$q_0 \in R$, and $R \neq \{q_0,0\}$: If $ad$ is applied,  then $\{s_1,s_2\}$ goes to $\{q_1,1\}$, and $R$ goes to
$R'$ such that there exists $x \in R', x \not\in \{q_1,1\}$. Then these two
states are distinguishable by the previous argument.
\item
$R = \{q_0,0\}$: State $\{s_1,s_2\}$ is final, but $\{q_0,0\}$ is not.
\ee
Hence all the allowable states are distinguishable and the theorem holds.
\qed
\end{proof}
\section{Stars of Binary Operations}
\label{sec:starops}

\subsection{The Language $(KL)^*$}

In 2008 Gao, K. Salomaa, and Yu~\cite{GSY_FI08} proved that 
$2^{m+n-1}+2^{m+n-4}-(2^{m-1}+2^{n-1}-m-1)$ is a tight upper bound for $(KL)^*$. 
They used the following  DFA's over alphabet $\Sig=\{a,b,c,d\}$:
For $K$, let
$\cD_K=(Q_1,\Sig,\delta_K,q_0,\{q_{m-1}\})$ with 
$a:(q_0,\ldots,q_{m-1})$,
$b:\one_{Q_K}$,
$c$ defined by $\delta_K(q_0,c)=\delta_K(q_{m-1},c)=q_0$, $\delta_K(q_i,c)=q_{i+1}$, for $i=1,\ldots,m-2$, and 
$d:\one_{Q_K}$.
For $L$, let $\cD_L=(Q_L,\Sig,\delta_L,0,\{n-1\})$ with 
$a:\one_{Q_L}$,
$b:(0,\ldots,n-1)$, 
$c:\one_{Q_L}$, and 
$d$ defined by $\delta_L(0,d)=\delta_L(n-1,d)=0$, $\delta_L(i,d)=i+1$, for $i=1,\ldots,n-2$.
We show that two permutationally equivalent dialects $W_m(a,b,c,d)$ and $W_n(d,c,b,a)$ of $U_n(a,b,c,d)$ also meet the bound.

\begin{figure}[hbt]
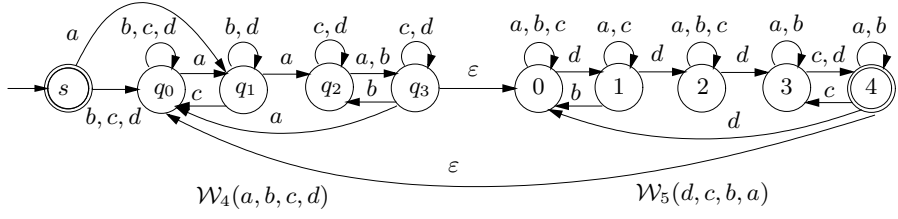

\begin{center}
\input prodstar.eepic
\end{center}
\caption{NFA for $((W_4(a,b,c,d)\;W_5(d,c,b,a))^*$.} 
\label{fig:prodstar}
\end{figure}

\begin{theorem}[$(KL)^*$]
Let $K_m=W_m(a,b,c,d)$ and $L_n=W_n(d,c,b,a)$. 
For $m,n\ge3$, the
complexity of  $(K_mL_n)^*$  is
$2^{m+n-1}+2^{m+n-4}-(2^{m-1}+2^{n-1}-m-1)$.
\end{theorem}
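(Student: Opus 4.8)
The plan is to follow the same template as the preceding proofs: build an NFA $\cN$ for $(K_mL_n)^*$, determinize and minimize it via the subset construction into a DFA $\cP$, count the allowable states, and then establish reachability and distinguishability for the specific witnesses $K_m=W_m(a,b,c,d)$ and $L_n=W_n(d,c,b,a)$. First I would fix the construction of $\cN$ shown in Fig.~\ref{fig:prodstar}: link the final state $q_{m-1}$ of $\cD_1$ to the initial state $0$ of $\cD_2$ by an $\eps$-transition (forming $K_mL_n$), add a new initial-and-final state $s$ whose outgoing transitions copy those of $q_0$, and add $\eps$-transitions from the final state $n-1$ of $\cD_2$ back to $q_0$ (forming the star). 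A state of $\cP$ is then a subset $R$ of $\{s\}\cup Q_1\cup Q_2$, where $s$ occurs only in the initial state $\{s\}$, and $\eps$-closure forces the two implications $q_{m-1}\in R\Rightarrow 0\in R$ and $n-1\in R\Rightarrow q_0\in R$. Counting subsets of $Q_1\cup Q_2$ subject to these two (independent) implications gives $9\cdot 2^{m+n-4}=2^{m+n-1}+2^{m+n-4}$ candidates, to which one adjoins $\{s\}$. I would then isolate the equivalence classes that collapse this figure to the stated bound, the net effect of which is the subtracted term $2^{m-1}+2^{n-1}-m-1$, arguing as in the merging step for $K^*L^*$ (where a state containing a starred initial state but not $0$ is equivalent to the state with $0$ adjoined).

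For reachability I would exploit the product structure of the letter actions: $a$ is a full cycle on $Q_1$ and the identity on $Q_2$, $d$ is the identity on $Q_1$ and a full cycle on $Q_2$, while $b$ and $c$ supply the transposition and singular maps needed to insert or delete individual states. The idea is to realize the desired $Q_1$-component using words over $\{a,b\}$ (which act on $Q_2$ only through the idempotent ${1\choose 0}$), and the desired $Q_2$-component using words over $\{c,d\}$ (permutations of $Q_2$ that act on $Q_1$ only as the singular ${q_1\choose q_0}$ or the identity), with $b$ supplying the $Q_2$-singular when a proper subset must be created. The argument proceeds by induction on $|R|$, building larger subsets from smaller ones. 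The $\eps$-transitions are used deliberately to pass between the three regimes of the count: entering $q_{m-1}$ forces $0$ into the $Q_2$-component, and entering $n-1$ forces $q_0$ into the $Q_1$-component.

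Distinguishability then splits according to which component differs. Given two allowable states whose $Q_2$-parts differ, I would apply a cyclic shift $d^k$ to move a distinguishing element to $n-1$, making exactly one of the two states final; given two states with equal $Q_2$-parts but differing $Q_1$-parts, I would apply $a^k$ to move the difference to $q_{m-1}$ and then a short word that pushes this difference through the $\eps$-link into $Q_2$ and on to $n-1$. The initial state $\{s\}$ is treated separately, exactly as in Theorem~\ref{thm:KULstar}: it is separated from every other state by a single letter except from its unique minimal image, from which it is distinguished by acceptance.

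The step I expect to be the main obstacle is reachability, not distinguishability. Because the star adds the loop-back $\eps$-transition $n-1\to q_0$ and the product adds $q_{m-1}\to 0$, the two components of a state are genuinely coupled: manipulating $Q_1$ with $b$ perturbs $Q_2$, and manipulating $Q_2$ with $c$ or $b$ perturbs $Q_1$, so the clean ``operate on one coordinate at a time'' strategy must be patched whenever $q_{m-1}$ or $n-1$ is present. Ordering the insertions carefully—filling the $Q_2$-component while $q_{m-1}$ is kept out of the $Q_1$-component, and only then activating the loop—so that every one of the $2^{m+n-1}+2^{m+n-4}$ candidate subsets is produced is where the bulk of the technical work lies; verifying the precise size of the collapsed equivalence classes, and hence the exactness of the subtracted $2^{m-1}+2^{n-1}-m-1$, is the secondary delicate point.
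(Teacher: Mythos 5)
Your overall skeleton---NFA for $(K_mL_n)^*$, subset construction, counting of candidate subsets, reachability, distinguishability, and the separate treatment of the initial state $\{s\}$---is the same as the paper's, and your two $\eps$-closure implications and the arithmetic $9\cdot 2^{m+n-4}=2^{m+n-1}+2^{m+n-4}$ are correct. But there is a genuine gap in how you propose to get from $9\cdot 2^{m+n-4}+1$ down to the stated bound. The deficit $2^{m-1}+2^{n-1}-m-1$ is \emph{not} produced by merging equivalent states, and your plan to reach every candidate subset and then collapse equivalence classes fails on both halves. First, not every candidate is reachable: (i) a subset whose $Q_1$-component $S$ is empty never occurs, because from $\{s\}$ every letter leads into $Q_1$ and the transitions of $\cD_1$ are total functions on $Q_1$, so the $Q_1$-component, once non-empty, stays non-empty; this rules out $2^{n-1}$ candidates (those with $S=\emp$, which the implications force to satisfy $n-1\notin T$, including the wholly empty subset). (ii) A subset with $|S|\ge 2$ and $T=\emp$ never occurs, because the only transition that can increase $|S|$ is the $\eps$-transition $n-1\der{\eps}q_0$, which requires $n-1\in T$ at that moment, and $T$ stays non-empty forever after since $\cD_2$ is complete; this rules out another $2^{m-1}-m$ candidates. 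These $2^{m-1}+2^{n-1}-m$ unreachable subsets, together with the $+1$ for $\{s\}$, account exactly for the difference between your figure and the bound.

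Second, there is nothing to merge: in the paper's proof every reachable allowable state of $\cP$ is distinguishable from every other one, so if you go hunting for equivalence classes of total deficit $2^{m-1}+2^{n-1}-m-1$ you will not find them. Your analogy with the merging step of $K^*L^*$ does not transfer: there the merged pairs involve the inner initial state $s_2$ of the second starred automaton, which can recur inside reachable subsets without $0$; in $(KL)^*$ the only added state is $s$, and it occurs in no reachable subset other than $\{s\}$ itself. So the correct replacement for your collapse step is the pair of reachability invariants in (i) and (ii) above, namely $S\neq\emp$ and ($|S|>1 \Rightarrow T\neq\emp$); once these are imposed, the count of allowable states equals the bound exactly, and the remainder of your plan (induction for reachability of the allowable states, distinguishability via cyclic shifts $d^k$ and $a^k$ with differences pushed through the $\eps$-links, and the two-step treatment of $\{s\}$) goes through essentially as in the paper.
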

\begin{proof}
Let $\cD_1=(Q_1,\Sig,\delta_1,q_0,\{q_{m-1}\})$  with
$Q_1=\{q_0,\ldots,q_{m-1}\}$ be the DFA of $K_m$, and 
let  $\cD_2=(Q_2,\Sig,\delta_2,0,\{n-1\})$ with $Q_2=\{0,\ldots,n-1\}$ be the DFA of $L_n$.
Let $\cN$ be the NFA for $(KL)^*$.
This NFA  is shown in Fig.~\ref{fig:prodstar} for $m=4$ and $n=5$.
Let 
$\cD$ be the DFA obtained from $\cN$ by the subset construction and minimization.

The states of $\cD$ are the initial state $\{s\}$ and states of the form $S \cup T$ where $\emp \subsetneq S \subseteq Q_1$ and $T \subseteq Q_2$.
Because of the $\eps$-transitions, the allowable states $S \cup T$ must have either $q_{m-1} \notin S$ or $q_{m-1} \in S$, and $0 \in T$.
Moreover, if $|S| > 1$, then $T \neq \emp$, as at least one $\eps$-transition from $n-1$ to $q_0$ must have been used.
The number of allowable states is counted as follows:
\be
\item
First, we have the initial state $\{s\}$.
\item
If $T = \emptyset$, then $|S| = 1$, and $q_{m-1} \not\in S$. There are $m-1$
such states.
\item
If $ T \neq \emptyset$, then $|S| \ge 1$.
	\be
	\item
	 $n-1 \not\in T$: If $q_{m-1} \not\in S$, then there are
	$(2^{m-1}-1)(2^{n-1}-1)$ such states. Otherwise, $q_{m-1} \in S$ and $0 \in T$, and
	there are $2^{m+n-3}$ such states.
	\item
	$n-1 \in T$: Then $q_0 \in S$. If $q_{m-1} \not\in S$, there are
	$2^{m+n-3}$ such states. Otherwise, $q_{m-1} \in S$ and $0 \in T$, and there
	are $2^{m+n-4}$ such states.
	\ee
\ee
Altogether we have $2^{m+n-1} + 2^{m+n-4} -
(2^{m-1} + 2^{n-1} - m - 1)$ states.
We will now show they are all reachable.

The initial state is $\{s\}$. 
We have $\{s\} \der{b} \{q_0\} \der{a^i} \{q_i\}$ for $i < m-1$. 

For $i < m-1$ and $T = \{t_1,\dots, t_k\} \subseteq Q_2 \bs \{n-1\}$ with $t_1<\cdots<t_k$, the state $\{q_i\} \cup T$ is reachable by $\{q_i\} \cup \{t_2 - t_1,\dots, t_k - t_1\} \der{a^m d^{t_1}} \{q_i\} \cup T$.
Suppose $n-1 \in T$, say $T = \{t_1,\dots, t_k, n-1\}$.
If $T \neq Q_2$, then the state $\{q_0\} \cup T$ is reachable by a applying a cyclic shift $d^l$ to some $\{q_0\} \cup T'$, where $n-1 \notin T'$.
Moreover, $\{q_{m-2}\} \cup (Q_2 \bs \{n-1\}) \der{da} \{q_0,q_1,q_{m-1}\} \cup Q_2 \der{cac} \{q_0\} \cup Q_2$.
Finally, if $0 \in T$ then $\{q_{m-2}\} \cup T \der{a} \{q_{m-1}\} \cup T$.
So all allowable states of the form $S \cup T$, $|S| = 1$ are reachable.

Let $S = \{q_{i_1}, \dots, q_{i_k}\}$, $0 < i_1 < \cdots < i_k$. 
Since $n-1 \notin T$, we have $\{q_{i_2 - i_1}, \dots, q_{i_k - i_1}\} \cup T \der{d^na^{i_1}} S \cup T$.
Now suppose $S = \{q_0, q_{i_2}, \dots, q_{i_k}\}$.
If $n-1 \in T$, then $\{q_0, q_{i_3 - i_2}, \dots, q_{i_k - i_2}\} \cup T \der{a(ac^2)^{i_2 - 1}} S \cup T$.
If $n-1 \notin T$ and $q_{m-1} \in S$, then $T = \{0, t_2, \dots, t_l\}$ and $t_l < n-1$.
Let $T' = \{0, t_2 - 1, \dots, t_l - 1, n-1\}$. 
Then $S \cup T'$ is reachable, and $S \cup T' \der{d} S \cup T \cup \{1\}$; if $1 \notin T$, apply $b^2$ to get $S \cup T$.

Finally, suppose $q_0 \in S$, $q_{m-1}
\not\in S$,  and  $n-1 \not\in T$. Suppose $T = \{t_1,\ldots,t_l\}$, $t_1 < \cdots < t_l$,
and let $T'' = \{t_2 - t_1 - 1, \ldots, t_l - t_1 - 1, n-1\}$. Since $q_0 \in
S$ and $n-1 \in T$,  state $S \cup T''$ is reachable. Then we reach $S \cup
T$ from $S \cup T''$ by applying $d^{t_1 + 1}$.

Therefore all the allowable states are reachable.

We now show all states are disintinguishable.
Let $S_1 \cup T_1$, $S_2 \cup T_2$ be two distinct states.
If $T_1 \neq T_2$, then the states are distinguishable by a cyclic shift $d^k$.
If $S_1 \neq S_2$, without loss of generality we may assume $q_{m-1} \in S_1 \oplus S_2$.
Then applying $b^2d^{n-1}$ results in states $S_1' \cup T_1'$, $S_2' \cup T_2'$, where $0 \in T_1' \oplus T_2'$, so the states are distinguishable.
Finally, the initial state $\{s\}$ is distinguished from every state other than $\{q_0\}$ by  $a$;
it is distinguishable from $\{q_0\}$ because it is final.
\qed
\end{proof}

\subsection{The Languages $(K\cup L)^*$}

In 2007 A.\ Salomaa, K.\ Salomaa, and S.\ Yu~\cite{SSY07} showed that the complexity of 
$(K\cup L)^*$ is $2^{m+n-1} -(2^{m-1}+2^{n-1}-1)$ with ternary witnesses. 
Jir\'askov\'a and Okhotin~\cite{JiOk11} used
binary witnesses: 
For $K$, let
$\cD_K=(Q_1,\Sig,\delta_K,0,\{0\})$ with 
$a:(0,\ldots,{m-1})$, and 
$b$ defined by $\delta_K(i,b)=i+1$, for $i=0,\ldots,m-2$, $\delta_K(m-1,b)=1$.
For $L$, let $\cD_L=(Q_L,\Sig,\delta_L,0,\{0\})$ with 
$a:{0\choose 1}$  and
$b:(0,\ldots,n-1)$. 
Permutationally equivalent binary dialects of $U_n(a,b,c)$ can also be used.
Let $\cS_n=\cS_n(a,b)=(Q,\Sig,\delta_S, 0, \{0\})$, where  
$a:(0,\ldots,n-1)$, and $b:{0\choose 1}$.
The following theorem was proved in~\cite{Brz12}:
\begin{theorem}[$(K_m\cup L_n)^*$]\mbox{}
\noin
For $m,n \ge 3$,  the complexity of $(S_m(a,b)\cup S_n(b,a))^*$ is $2^{m+n-1} -(2^{m-1}+2^{n-1}-1)$.
\end{theorem}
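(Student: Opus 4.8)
**Proof plan for the complexity of $(S_m(a,b)\cup S_n(b,a))^*$.**

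The plan is to follow the same subset-construction template used throughout this paper, adapted to the two binary dialects. First I would set up the DFA's: $\cD_1=\cS_m(a,b)$ with $a:(0,\ldots,m-1)$ and $b:{0\choose 1}$, and $\cD_2=\cS_n(b,a)$, which is the permutationally equivalent stream, so in $\cD_2$ the roles are swapped: $b$ is the full cycle and $a$ is the singular map ${0\choose 1}$. Both have $\{0\}$ as the sole final state. I would form the product automaton for $K\cup L$ over the common alphabet $\{a,b\}$, add a fresh initial-and-final state $s$ with the outgoing transitions of the product's start state and $\eps$-transitions back from each accepting product state, and then pass to the minimal DFA $\cD$ of the star via the subset construction. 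The natural upper bound $2^{m+n-1}-(2^{m-1}+2^{n-1}-1)$ comes from the standard star-of-product counting: states are subsets $U$ of $Q_1\times Q_2$ together with the initial $\{s\}$, where the $\eps$-closure forces any subset containing an accepting pair to also contain the restart pair $(0,0)$; counting the subsets that either avoid all final pairs or contain a final pair together with $(0,0)$ yields exactly the claimed bound.

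The two substantive halves are reachability and distinguishability of these subsets. For \textbf{reachability}, I would exploit the asymmetry of the two dialects: in $\cD_1$ the letter $a$ cyclically permutes the first coordinate while $b$ acts as the transposition-like singular map, and in $\cD_2$ the letters trade these roles, so the pair $(a,b)$ gives me a cycle in one coordinate and a ``pull together'' move in the other, and vice versa. Starting from $\{s\}$, a single nonempty input lands in a subset containing $(0,0)$; I would then build up larger subsets inductively on cardinality, using the cyclic shifts $a^k$ and $b^k$ to reposition elements within a fixed-size subset and the singular maps to merge two coordinates into one (thereby increasing the subset size by controlling which pairs collapse). Concretely I expect an induction ``all allowable subsets of size $\le k$ are reachable $\Rightarrow$ size $k+1$ are reachable'' driven by applying a cycle to spread a chosen coordinate and then a singular letter to create a new distinct pair, closely mirroring the $K^*L^*$ and $(KL)^*$ reachability arguments above.

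For \textbf{distinguishability}, I would take two distinct allowable subsets $U_1\neq U_2$, pick a pair $(i,j)$ in the symmetric difference, and steer it to the unique restart/accepting configuration: apply a power of the $\cD_1$-cycle and a power of the $\cD_2$-cycle to move $(i,j)$ to $(0,0)$, which is the accepting restart pair, so that one subset becomes final and the other does not. Care is needed because the shifts act simultaneously on both coordinates, but since $a$ cycles only the first coordinate and $b$ only the second (each being the identity-like singular map on the other), I can adjust the two coordinates independently and send any target pair to $(0,0)$ while keeping the distinguishing pair separated. The main obstacle I anticipate is the reachability induction: verifying that the merge-and-shift moves can generate \emph{every} allowable subset, and in particular handling the boundary subsets involving the restart pair $(0,0)$ and the final pairs correctly, so that the count of reachable states matches the upper bound exactly rather than falling short by the $\eps$-closure-identified duplicates. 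Once reachability is pinned down, distinguishability should follow routinely from the independent-coordinate cycle argument, completing the proof that the complexity equals $2^{m+n-1}-(2^{m-1}+2^{n-1}-1)$.
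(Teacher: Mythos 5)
Your plan is built on the wrong construction, and the counting claim at its heart is numerically false. You form the \emph{Cartesian product} DFA of $\cD_1$ and $\cD_2$ for $K\cup L$ and then star it, so that subset-construction states are subsets $U\subseteq Q_1\times Q_2$. But the number of subsets of $Q_1\times Q_2$ that ``either avoid all final pairs or contain a final pair together with $(0,0)$'' is $2^{(m-1)(n-1)}+2^{mn-1}$ (all subsets of the $(m-1)(n-1)$ non-final pairs, plus all subsets containing $(0,0)$, which is itself final); for $m=n=3$ this is $272$, not the claimed $25$. So the ``standard star-of-product counting'' does not yield $2^{m+n-1}-(2^{m-1}+2^{n-1}-1)$, your upper-bound step collapses, and the reachability/distinguishability program that is supposed to match ``all allowable subsets'' inherits the same wrong state space. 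The argument that actually gives this bound (the paper does not reprove the theorem here, it cites \cite{Brz12}; the same structure is in \cite{JiOk11}) uses the NFA for $(K\cup L)^*$ that is the \emph{disjoint union} of $\cD_1$ and $\cD_2$, with both initial states active and restart transitions from final states. Subset-construction states are then \emph{pairs} $(S,T)$ with $\emp\subsetneq S\subseteq Q_1$ and $\emp\subsetneq T\subseteq Q_2$; since the final state of each witness is its initial state $0$, the restart closure forces $0\in S \Leftrightarrow 0\in T$, and the count is $2^{m-1}2^{n-1}+(2^{m-1}-1)(2^{n-1}-1)=2^{m+n-1}-(2^{m-1}+2^{n-1}-1)$, exactly the bound. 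Pairs of subsets and subsets of pairs are genuinely different objects, and conflating them is the gap. (Note also that the true bound carries no extra $+1$ for your fresh initial state $\{s\}$: with these witnesses $\eps\in K\cup L$, so no new initial state is needed.)

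A second, smaller but real error: you assert that $a$ and $b$ can be used to adjust the two coordinates ``independently'' because each letter is ``identity-like'' on the machine it does not cycle. In fact $a$ acts on $\cD_2$ as the singular map ${0\choose 1}$ and $b$ acts on $\cD_1$ as ${0\choose 1}$; these are not identities. Cycling one machine into a desired position inevitably merges state $0$ into state $1$ of the other machine, and any correct reachability or distinguishability argument for these binary witnesses must track and exploit that side effect (it is exactly what makes two-letter witnesses possible at all) rather than assume it away. In the pairs-of-subsets setting this is manageable — e.g.\ one cycles a distinguishing element of $S_1\oplus S_2$ to $0$ while recording how $T_1,T_2$ are deformed — but your independence assumption, as stated, would fail on the very first application.
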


\subsection{The Language $(K\cap L)^*$}
It was also proved in~\cite{JiOk11} that the complexity of $(K\cap L)^*$ is $2^{mn-1}+2^{mn-2}$, which is the composition of the complexities of intersection and star.
Their witnesses $K$ and $L$ were over an alphabet of six letters,  $\Sig=\{a,b,c,d,e,f\}$:
For $K$, let
$\cD_K=(Q_K,\Sig,\delta_K,0,\{m-1\})$, with $Q_K=\{0,\ldots,m-1\}$.
For $L$, let $\cD_L=(Q_L,\Sig,\delta_L,0,\{n-1\})$, with $Q_L=\{0,\ldots,n-1\}$.
The transitions were as follows:

\begin{equation*}
\begin{array}{ccc}
\cD_K & \quad \quad & \cD_L\\
\hline
a:(0,\ldots,m-1) & & a:(0,\ldots,n-1)\\
b:1_{Q_K} & & b:(0,\ldots,n-1)\\
c:(1,\ldots,m-1) & & c:1_{Q_L}\\ 
d:1_{Q_K} & & d:(1,\ldots,n-1)\\
e:{1\choose 0} & &  e:1_{Q_L}\\
 f:1_{Q_K} & & f:{1\choose 0}
\end{array}
\end{equation*}

We conjecture that quinary witnesses can also be used.
Let $\Sig=\{a,b,c,d,e\}$ and $\cU_n(a,b,c,d,e)=(Q_K,\Sig, \delta_\cU,0,\{n-1\})$,
where $Q_K=(0,\ldots,n-1\}$,
$a:(0,\ldots,n-1)$, $b:(0,1)$, $c:{n-1\choose 0}$, $d=1_{Q_K}$, and $e:(1,\ldots n-1)$.
Let $\pi$ be the permutation that sends $\{a,b,c,d,e\}$ to $\{e,c,b,a,d\}$,  let 
$\cD_1= \cU_n(a,b,c,d,e)$, and $\cD_2= \cU_n(e,c,b,a,d)$.
The transitions in $\cD_1$ and $\cD_2$ are:
\begin{equation*}
\begin{array}{ccc}
\cD_1 & \quad \quad & \cD_2\\
\hline
a:(0,\ldots,m-1) & & \pi(a):1_{Q_2}\\
b:(0,1) & & \pi(b):{n-1\choose 0}\\
c:{m-1\choose 0} & & \pi(c):(0,1)\\ 
d:1_{Q_1} & & \pi(d):(1,\ldots,n-1)\\
e:(1,\ldots,m-1)& &  \pi(e):(0,\ldots,n-1)
\end{array}
\end{equation*}
Note that $U_n(a,b,c,d,e)$ is an extension of $U_n(a,b,c,d)$ to 5 letters.
\begin{conj}[$(K_m\cap L_n)^*$]
\label{con:int}
Let $K_m=U_m(a,b,c,d,e)$ and $L_n=U_n(e,c,b,a,d)$.
Then the  complexity of $(K_m \cap L_n)^*$ is 
$2^{mn-1} +2^{mn-2}$ for $m,n \ge 3$. 
\end{conj}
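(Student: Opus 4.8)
The plan is to follow the same two-phase template (reachability, then distinguishability) used throughout the paper, but now applied to the star of an \emph{intersection}, whose base DFA lives on the product state set $Q_1\times Q_2$ of size $mn$. First I would form the direct product $\cD=\cD_1\times\cD_2$ accepting $K_m\cap L_n$; its single final state is $(m-1,n-1)$. Since $U_n(a,b,c,d,e)$ extends the universal witness $U_n(a,b,c,d)$, and the inputs of $\cU_n$ already realize all $n^n$ transformations, the product transformations on $Q_1\times Q_2$ should be rich enough to realize a large transformation monoid on $mn$ states; the extra letter $e:(1,\dots,n-1)$ is there precisely to supply, jointly with $a,b,c,d$, the symmetric group and enough singular maps on the product. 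I would first argue that $\cD$ is a minimal DFA of complexity $mn$ for $K_m\cap L_n$, so that the star bound $2^{mn-1}+2^{mn-2}$ is exactly the known tight bound for star applied to an $mn$-state language, and then show this bound is attained.

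For the star construction I would build the NFA $\cN$ for $(K_m\cap L_n)^*$ by adding a new initial-and-final state $s$ with the same outgoing transitions as the product's start state $(0,0)$, together with an $\eps$-transition from the final product state $(m-1,n-1)$ back to $(0,0)$. After the subset construction and minimization, the allowable states are $\{s\}$ together with the nonempty subsets $S\subseteq Q_1\times Q_2$ satisfying the usual star constraint: either $(m-1,n-1)\notin S$, or $(m-1,n-1)\in S$ and $(0,0)\in S$. This yields exactly $1+(2^{mn-1}-1)+2^{mn-2}=2^{mn-1}+2^{mn-2}$ allowable states, matching the claimed complexity. The reachability argument would proceed by induction on $|S|$, exactly as in the $K^*$ reachability proofs earlier (e.g.\ in the proof of the $(KL)^*$ theorem): use a cyclic shift (the product cycle generated by $a$ and $e$) to rotate subsets, use the singular letters $c$ and $f$-analogue to merge a coordinate, and use $b$/its image to break out of the subgroup fixing $(0,0)$. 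The key is that the product transformation monoid acts transitively enough on $k$-element subsets of $Q_1\times Q_2$ to climb from size $k$ to size $k+1$.

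For distinguishability I would show that any two distinct allowable subsets $S\neq T$ can be separated: pick $p\in S\oplus T$, and using a cyclic shift over the $mn$-cycle move $p$ to the unique final coordinate $(m-1,n-1)$, so that exactly one of the two states becomes final after a suitable word; the final-state distinction of $s$ from $(\{(0,0)\})$ is handled as in Theorem~\ref{thm:KULstar} by the behaviour on a single input. The main obstacle, and the reason this remains a \emph{conjecture} rather than a theorem, is precisely the first step: verifying that five letters suffice to generate enough of the transformation monoid on the $mn$-element product set to make \emph{all} $2^{mn-1}+2^{mn-2}$ allowable subsets reachable. Jir\'askov\'a and Okhotin needed six letters, and collapsing to five while retaining both a full cyclic action (for the rotate-and-separate distinguishability step) and the singular maps needed to grow subset sizes is delicate; the permutational-equivalence trick ($L_n=U_n(e,c,b,a,d)$) must be checked to supply the missing generators on the second coordinate without destroying those on the first. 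I expect the reachability induction, not distinguishability, to be where the quinary alphabet is genuinely tight, and that is the step I would scrutinize most carefully.
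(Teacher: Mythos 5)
First, a point of comparison that matters: the paper offers \emph{no proof} of this statement---it is stated as a conjecture, supported only by machine verification for $m=3$, $n=3,4,5,6$ and $m=4$, $n=4,5$. Your proposal must therefore stand on its own as a complete argument, and it does not. Your upper-bound count is correct: with the single product final state $(m-1,n-1)$, the allowable states are $\{s\}$, the nonempty subsets of $Q_1\times Q_2$ avoiding $(m-1,n-1)$, and the subsets containing both $(m-1,n-1)$ and $(0,0)$, giving $2^{mn-1}+2^{mn-2}$, which indeed equals the Jir\'askov\'a--Okhotin bound $3\cdot 2^{mn-2}$. But you concede yourself that the reachability induction---that the monoid generated by the five letters, acting on the $mn$-element product set, reaches every allowable subset---is unverified. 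That is not a detail to be ``scrutinized most carefully'' later; it is the entire content of the conjecture, and it is exactly the step for which Jir\'askov\'a and Okhotin needed six letters. An outline that defers this step proves nothing.

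Second, the concrete structural claims you do make are wrong or unsupported. You invoke ``the singular letters $c$ and $f$-analogue'': there is no letter $f$ here; the alphabet is $\{a,b,c,d,e\}$, and the whole point is to dispense with the sixth generator, so you cannot borrow it. More seriously, your distinguishability argument relies on ``a cyclic shift over the $mn$-cycle (the product cycle generated by $a$ and $e$)'' to move an arbitrary element of $S\oplus T$ to $(m-1,n-1)$. No such shift is available: on the product, $a$ acts as (full $m$-cycle on $Q_1$, identity on $Q_2$) and $e$ acts as (the cycle $(1,\ldots,m-1)$ fixing $0$, full $n$-cycle on $Q_2$), so neither letter is an $mn$-cycle, no obvious word in them is one, and when $\gcd(m,n)>1$ even commuting full cycles on the two coordinates could not compose to one. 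This is in contrast with the paper's actual theorems, where distinguishability is always argued coordinatewise using a genuine single-letter cycle on each coordinate, not a cycle on the product set. Finally, minimality of the product DFA (that $K_m\cap L_n$ has complexity $mn$ for these witnesses) is asserted but never shown. In short, your skeleton---NFA for the star, subset construction, count, reachability, distinguishability---is the right template, but every load-bearing step is either missing or rests on transformations that these five letters do not perform; the statement remains exactly as open after your proposal as before it.
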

This has been verified for $m=3$ and $n=3, 4, 5, 6$ and for $m=4$ and $n=4, 5$.

\subsection{The Language $(K\setminus L)^*$}

\begin{theorem}[$(K_m\setminus L_n)^*$]\mbox{}
\noin
The complexity of the operation $(K_m\setminus L_n)^*$
is $2^{mn-1} +2^{mn-2}$ for $m,n \ge 3$, and it is met by the witnesses $K_m$ and $\ol{L_n}$, where $K_m$ and $L_n$ are the witnesses of Jir\'askov\'a and Okhotin for intersection.
\end{theorem}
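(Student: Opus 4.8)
The plan is to reduce the claim to the already-established formula for $(K\cap L)^*$ by exploiting the set-theoretic identity $K\setminus L=K\cap\ol{L}$. Since complementation is an involution, substituting the second argument $\ol{L_n}$ into this identity gives
$$(K_m\setminus\ol{L_n})^* = \bigl(K_m\cap\ol{\ol{L_n}}\bigr)^* = (K_m\cap L_n)^*,$$
where $K_m$ and $L_n$ are the six-letter witnesses of Jir\'askov\'a and Okhotin for intersection. Thus the language whose complexity we must determine is literally $(K_m\cap L_n)^*$, and no new automaton construction is needed.

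Next I would confirm that $\ol{L_n}$ is a legitimate witness of the required state complexity, namely $n$. This is immediate: the minimal DFA of $\ol{L_n}$ is obtained from the minimal DFA of $L_n$ by interchanging its final and non-final states, an operation that alters neither the transition structure nor the set of reachable states nor the pairwise distinguishability of states. Hence $\ol{L_n}$ has state complexity exactly $n$, matching the hypothesis on the second argument, and $K_m$ already has complexity $m$.

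The theorem then follows at once. By~\cite{JiOk11} the complexity of $(K_m\cap L_n)^*$ is $2^{mn-1}+2^{mn-2}$, and by the displayed identity this equals the complexity of $(K_m\setminus\ol{L_n})^*$. Since $K_m$ and $L_n$ are known to meet the bound for $(K\cap L)^*$, the reachability and distinguishability arguments are inherited wholesale and need not be repeated. The only point demanding care --- the closest thing to an obstacle in an otherwise one-line argument --- is precisely the verification above that complementation preserves state complexity, so that the pair $(K_m,\ol{L_n})$ genuinely realizes complexities $(m,n)$; everything else is a direct appeal to the prior result.
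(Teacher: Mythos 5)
Your proposal is correct and is essentially the paper's own argument: the paper's entire proof is the one-line identity $(K\setminus \ol{L})^* = (K\cap L)^*$ together with an appeal to the Jir\'askov\'a--Okhotin result for $(K\cap L)^*$. Your additional verification that complementation preserves state complexity (so that $\ol{L_n}$ is a legitimate witness of complexity $n$) is a detail the paper leaves implicit, but it does not change the route.
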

\begin{proof}

This follows since $(K\setminus \ol{L})^* =(K\cap L)^*$.
\qed
\end{proof}
If Conjecture~\ref{con:int} holds, then we also have 

\begin{conj}[$(K_m\setminus L_n)^*$]
Let $K_m=U_m(a,b,c,d,e)$ and $L_n=\ol{U_n(e,c,b,a,d)}$.
Then the  complexity of $(K_m \cap L_n)^*$ is 
$2^{mn-1} +2^{mn-2}$ for $m,n \ge 3$. 
\end{conj}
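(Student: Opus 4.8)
The plan is to reduce this conditional statement to Conjecture~\ref{con:int} by exactly the same set-theoretic identity used in the preceding theorem for $(K_m\setminus L_n)^*$. The key observation is that for any two languages $K$ and $M$ we have $K\setminus \ol{M}=K\cap \ol{\ol{M}}=K\cap M$, since $K\setminus M'=K\cap\ol{M'}$ for every $M'$. Applying this with $M=U_n(e,c,b,a,d)$ gives
$$K_m\setminus L_n = U_m(a,b,c,d,e)\setminus\ol{U_n(e,c,b,a,d)} = U_m(a,b,c,d,e)\cap U_n(e,c,b,a,d),$$
and hence $(K_m\setminus L_n)^* = (U_m(a,b,c,d,e)\cap U_n(e,c,b,a,d))^*$.

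First I would confirm that $L_n=\ol{U_n(e,c,b,a,d)}$ indeed has state complexity $n$, so that the hypotheses match those of the result it invokes. This is immediate: the minimal DFA of $U_n(e,c,b,a,d)$ has $n$ states, and complementing a regular language leaves its minimal DFA unchanged except for interchanging final and non-final states, so $L_n$ is recognized by a minimal DFA with the same $n$ states. Thus the complexities of $K_m$ and $L_n$ are $m$ and $n$ respectively, as required. Having established the identity, I would then invoke Conjecture~\ref{con:int}, which (conditionally) asserts that $(U_m(a,b,c,d,e)\cap U_n(e,c,b,a,d))^*$ has complexity $2^{mn-1}+2^{mn-2}$. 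Combining this with the displayed identity yields that $(K_m\setminus L_n)^*$ has complexity $2^{mn-1}+2^{mn-2}$, completing the argument.

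The main obstacle is not in this reduction, which is entirely routine once the identity is observed, but in Conjecture~\ref{con:int} itself, on which the present statement is explicitly predicated; that conjecture has so far only been verified computationally for small values of $m$ and $n$. Consequently the present statement inherits exactly the same status, and all the genuine difficulty lies in establishing the reachability and distinguishability of the $2^{mn-1}+2^{mn-2}$ states for the quinary intersection witnesses. I would also flag a minor inconsistency in the displayed statement: its body reads $(K_m\cap L_n)^*$ while its title reads $(K_m\setminus L_n)^*$. The intended object is $(K_m\setminus L_n)^*$, consistent both with the title and with the role of the complemented second witness $L_n=\ol{U_n(e,c,b,a,d)}$, for which the identity above is precisely what makes the reduction work.
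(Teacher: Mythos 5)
Your reduction is exactly the paper's own argument: the identity $(K\setminus\ol{M})^{*}=(K\cap M)^{*}$, applied with $M=U_n(e,c,b,a,d)$, is precisely how the paper derives this conditional conjecture from Conjecture~1 (and the same identity underlies its preceding theorem with the Jir\'askov\'a--Okhotin witnesses). Your observation that the body's ``$(K_m\cap L_n)^{*}$'' should read ``$(K_m\setminus L_n)^{*}$'' is also correct, and your proposal is sound.
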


\subsection{The Language $(K\oplus L)^*$}
The complexity of this combined operation remains open.

\section{Conclusions}
\label{sec:conc}

We have proved that the universal witnesses $U_n(a,b,c)$ and  $U_n(a,b,c,d)$, along with their permutational equivalents $U_n(b,a,c)$ and $U_n(d,c,b,a)$, and dialects
$U_{\{0\},n}(a,b,c)$, $T_n(a,b,c)$, $T_n(b,a,c)$, $W_n(a,b,c,d)$, $W_{\{0\},n}(a,b,c,d)$,
$W_n(d,c,b,a)$ suffice to act as witnesses for all state complexity bounds involving binary boolean operations and product combined with star.
In the case of one or two starred arguments, 
we have shown that it is efficient to consider all four boolean operations together.
The use of universal witnesses and their dialects simplified several proofs, and allowed us to utilize the similarities in the witnesses.
\medskip

\noin 
{\bf Acknowledgment} We thank Baiyu Li for careful proofreading and correcting several flaws in an earlier version of the paper.

\end{document}